\newtheorem{definition}{Definition}
\newtheorem{lemma}{Lemma}
\newtheorem{theorem}{Theorem}
\title{Universally Optimal Privacy Mechanisms for Minimax Agents}
\author{
  Mangesh Gupte\\
  Department of Computer Science,\\
  Rutgers University, Piscataway, NJ 08854.  \\
  mangesh@cs.rutgers.edu \\
  \and
  Mukund Sundararajan \\
  Google -- 1600 Amphitheatre Parkway, \\
  Mountain View,  CA 94043. \\
  mukunds@google.com 
}
\begin{document}
\maketitle

\begin{abstract}

  A scheme that publishes aggregate information about sensitive
  data must resolve the trade-off between utility to information
  consumers and privacy of the database participants. Differential
  privacy~\cite{BigBang} is a well-established definition of
  privacy---this is a universal guarantee against all attackers,
  whatever their side-information or intent. In this paper, we
  present a universal treatment of utility based on the standard
  minimax rule from decision theory~\cite{minimax} (in contrast to
  the utility model in~\cite{Bayesian}, which is Bayesian).
  
  In our model, information consumers are minimax (risk-averse)
  agents, each possessing some side-information about the query,
  and each endowed with a loss-function which models their
  tolerance to inaccuracies. Further, information consumers are
  rational in the sense that they actively combine information from
  the mechanism with their side-information in a way that minimizes
  their loss. Under this assumption of rational behavior, we show
  that for every fixed count query, a certain geometric mechanism
  is universally optimal for all minimax information
  consumers. Additionally, our solution makes it possible to
  release query results at multiple levels of privacy in a
  collusion-resistant manner.
\end{abstract}

\section{Introduction}

\textbf{Privacy Mechanisms:} Agencies such as medical establishments,
survey agencies, governments use and publish aggregate statistics
about individuals; this can have privacy implications. Consider the
query: \emph{Q: How many adults from San Diego contracted the flu this
  October?} The government can use the query result to track the
spread of flu, and drug companies can use it to plan production of
vaccines. However, knowledge that a \emph{specific} person contracted
the flu could be used to deny her health insurance based on the
rationale that she is susceptible to disease.  As discussed
in~\cite{Sweeney}, and as is exemplified by~\cite{AOL,Netflix},
seemingly benign data publications can have privacy
implications. Thus, it is important to think rigorously about
privacy. The framework of differential privacy~\cite{Survey} does
this, and is applicable widely (see Section~\ref{sec:related}).

Mechanisms guarantee differential privacy by perturbing results --
they add random noise to the query result, and guarantee protection
against all attackers, whatever their side-information or intent
(see Section~\ref{sec:basic} for a formal definition).

\textbf{Our Utility Model:} The addition of noise increases privacy
but intuitively reduces utility of the query result. To understand
this privacy-utility trade-off, we propose a formal
decision-theoretic model of utility. Decision-theory is a widely
applied field that provides mathematical foundations for dealing
with preferences under uncertainty. The use of decision theory in
this context is appropriate because, as we discussed above,
mechanisms guarantee differential privacy by introducing
uncertainty.

In our model of utility (see Section~\ref{sec:utility} for
details), the user of information, i.e. the \emph{information
  consumer} has \emph{side-information}---for instance, knowledge
of the population of San Diego is an upper bound on the result of
the query $Q$. It has a \emph{loss-function} that expresses it's
tolerance to inaccuracy. It is \emph{rational} in the sense that it
combines information from the mechanism with its side-information
optimally with respect to its personal loss-function. It is
\emph{risk-averse} in the sense that it would like to minimize
worst-case loss over all scenarios.\footnote{Ghosh et
  al.~\cite{Bayesian} propose a model with most of these features,
  but assumes that information consumers are Bayesian and have a
  prior over the query-result. }

Given the privacy parameter, the loss-function and the
side-information of an information consumer it is possible to
identify an optimal mechanism -- a mechanism that is differentially
private and that maximizes its utility. See Section ~\ref{sec:LP}
for an algorithm to find such a mechanism.

\textbf{Non-Interactive Settings:} Very often aggregate statistics,
like answers to Q, are \emph{published} in mass media as opposed to
following a query-response form~\cite{h1n1}. In such cases neither
the information consumer nor it's loss-function and
side-information are known in advance. Thus it seems hard to
identify the optimal mechanism for a information consumer.

Nevertheless, we show that it is possible to deploy an optimal
mechanism \emph{without knowledge of the information consumer's
  parameters}. Furthermore, this mechanism is \emph{universally
  optimal for all information consumers}, no matter what their
side-information or loss-function.

How can we identify the optimal mechanism without knowledge of the
information consumer's parameters? The apparent paradox is resolved
by relying on the information consumers' rationality, i.e., each
information consumer \emph{uses} its personal loss-function and
side-information to actively transform the output of the deployed
mechanism. For a certain class of queries called count queries,
when the deployed mechanism is a certain geometric mechanism, this
transformation is effective enough to result in the optimal
mechanism for the information consumer---a fact that we will
establish via linear-algebraic proof techniques.
  
\textbf{Multiple Levels of Privacy:} We also show how to
simultaneously release the query result at different levels of
privacy to different information consumers. This is useful, for
instance, when we want to construct two versions of the report on
flu statistics, one which prioritizes utility for the eyes of
government executives, and a publicly available Internet version
that prioritizes privacy.

A naive solution is to perturb the query results differently,
independently adding differing amounts of noise each time. The
drawback is that consumers at different levels of privacy can
collude and combine their results to \emph{cancel} the noise (as in
Chernoff bounds). An alternate way is to correlate the noise added
to different outputs. We give an algorithm to achieve this that
makes the data release collusion-resistant.

In this paper we focus on a single query; the complexity comes from
a rich model of consumer preferences, where we consider different
utility functions for each consumer and optimize for each of
them. \cite{Learning, Boosting, Geometry} exploit similarities
between the queries to obtain extension to multiple queries with
good utility guarantees. However, they do not consider a rich
consumer preference model. Our results could be used as a building
block while answering multiple queries.

\section{Model and Results}
\label{sec:model}
We gave a informal description of our model and results in the
Introduction. In this section, we formally define our model and
discuss the main results. The proofs of the results are presented
in Sections~\ref{sec:char},~\ref{sec:applications}.

\subsection{Privacy Mechanisms and Differential Privacy}
\label{sec:basic}

A \emph{database} is a collection of rows, one per individual. Each
row is drawn from an arbitrary domain $D$; for instance, in our
running example, a row of the database has the name, age, address
and medical records of a single individual. A database with $n$
rows is thus drawn from the domain $D^n$.

We will focus on a class of queries, called \emph{count queries},
that frequently occur in surveys: Given a predicate $p : D \to
\{\text{True, False}\}$, the result of a count query is the number
of rows that satisfy this predicate, a number between $0$ and the
database size, $n$. $Q$ is an example of a count query with the
predicate: \emph{individual is an adult residing in in San Diego,
  who contracted flu this October}. Though simple in form, count
queries are expressive because varying the predicate naturally
yields a rich space of queries.

We guarantee \emph{differential privacy} to protect information of
individual database participants. Differential privacy is a
standard, well-accepted definition of privacy~\cite{Survey} that
has been applied to query privacy~\cite{Datamining,BigBang,Smooth},
privacy preserving machine learning~\cite{Learning, Kavi08} and
economic mechanism design~\cite{MD}. A \emph{fixed} count query
maps the database $d$ to a number which belongs to the set $N$. A
privacy mechanism $M$ for a \emph{fixed} count query is a
probabilistic function that maps a database $d \in D^n$ to the
elements of the set $N = \{0 \ldots n\}$. These can be represented,
for each $d \in D^n$, by $\{m_{d,r} \}_{r \in N}$, which gives for
each database $d\in D^n$ the probability that $M$ outputs $r$. For
the database $d$, the mechanism releases a perturbed result by
sampling from the distribution $\{m_{d,r}\}_{r \in N}$.

The Geometric Mechanism~\cite{Bayesian} is a simple example of a
privacy mechanism. It is a discrete version of the Laplace
Mechanism from~\cite{BigBang}.

\begin{definition}[$\alpha$-Geometric Mechanism]
\label{def:geom}
  When the true query result is $f(d)$, the mechanism outputs $f(d)
  + Z$. $Z$ is a random variable distributed as a two-sided
  geometric distribution: $Pr[Z=z]=\frac{1-\alpha}{1+\alpha}
  \alpha^{|z|}$ for every integer~$z$. 
\end{definition}

\begin{figure}[ht]  
  \includegraphics[width=0.33\textwidth,angle=-90]{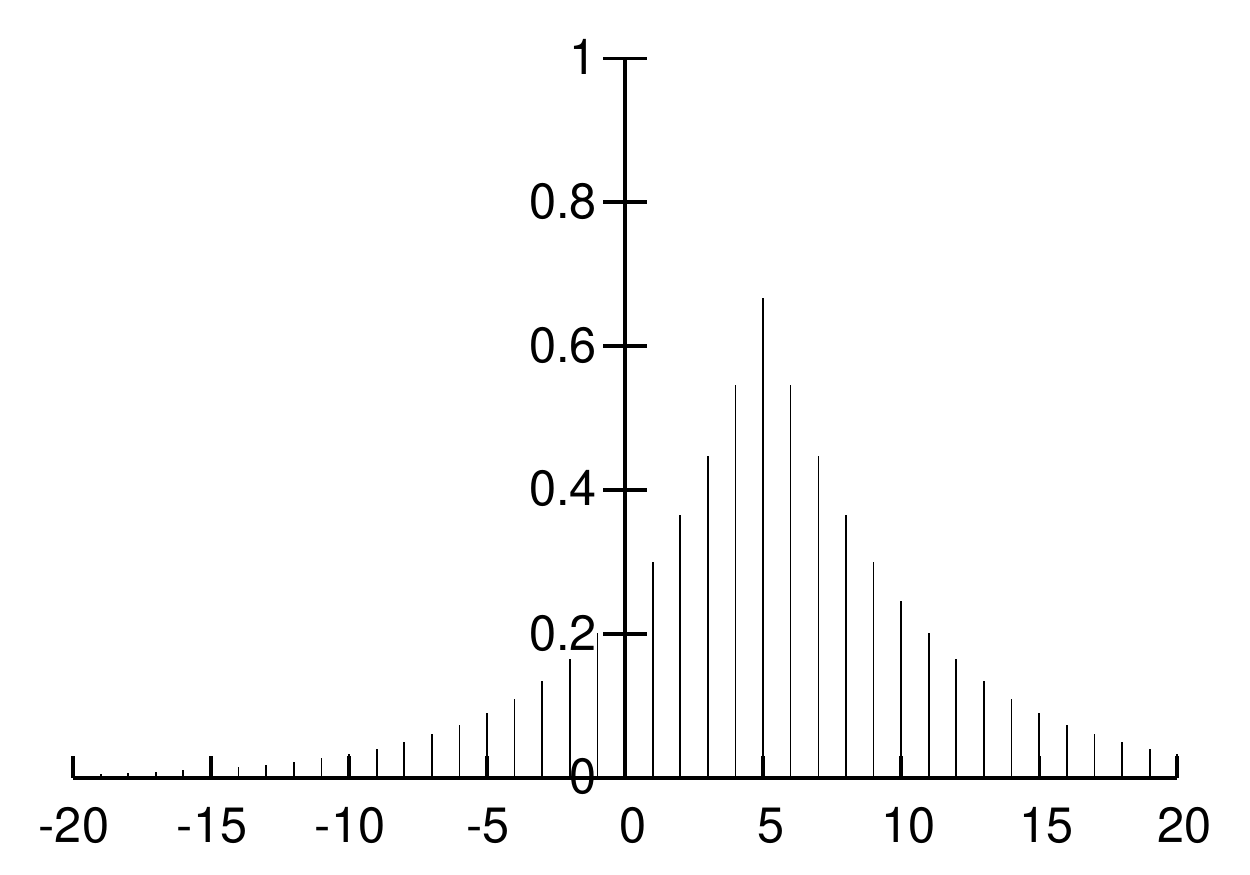}
  \caption{The probability distribution on outputs given by the
    Geometric Mechanism for $\alpha = 0.2$ and query result 5.}

\end{figure}

Informally, a mechanism satisfies differential privacy if it
induces similar output distributions for every two databases that
differ only in a single individual's data, thereby ensuring that
the output is not sensitive to any one individual's
data\footnote{Thus any attack on an individual's privacy that can
  be constructed using the perturbed query result with this
  individual present in the database can also be constructed, with
  a similar success rate, without this individual present in the
  database. See~See~\cite{BigBang, Semantic} for details of such
  semantics of differential privacy.}. Formally, differential
privacy is defined as follows~\cite{BigBang}:

Given a privacy parameter $\alpha \in [0,1]$ and two database $d_1,
d_2 \in D^n$ that differ in at most one individual's data, a
mechanism $M$ is $\alpha$-differentially private, if for all
elements $r$ in the range of the mechanism : $\reci{\alpha}\cdot
x_{d_1,r} \geq x_{d_2,r} \geq \alpha \cdot x_{d_1,r}$.

The parameter $\alpha$ can be varied in the interval $[0,1]$ to
vary the strength of the privacy guarantee---when $\alpha=0$, the
above definition is vacuous and there is no privacy, whereas when
$\alpha=1$, we effectively insist on absolute privacy-- the query
result cannot depend on the database because we require
distributions over perturbed results to be identical for
neighboring databases.

\subsection{Oblivious Mechanisms} 
 
We will focus in this paper on a class of privacy mechanisms that
are oblivious. A mechanism is \emph{oblivious} if it sets up an
identical distribution over outputs for every two databases that
have the same unperturbed query result. Naturally, an
implementation of an oblivious mechanism need only have access to
the true query result---the \emph{input}---and can be oblivious to
the database itself. An oblivious mechanism for count queries can
be expressed by the set of probability masses for every $i\in N$ :
$\{x_{i,r}\}_{r \in N}$, where $x_{i,r}$ is the probability that
the mechanism outputs $r$ when the true result is
$i$. Appendix~\ref{sec:nonoblivious} shows that this restriction to
oblivious mechanisms is without loss of generality. The geometric
mechanism (Definition~\ref{def:geom}) only depends on the query
result $f(d)$ and not on the database $d$ itself; so it is a
oblivious mechanism.

The query result for a count query can change by at most one when
we change any one row of the database, so we can rewrite the
definition of differential privacy as follows:

\begin{definition} [Differential Privacy for Count Queries]
  \label{def:privacy}
  An oblivious mechanism for count queries for $\alpha \in [0,1]$
  is $\alpha$-differentially private if for all $ i\in\{0 \ldots
  n-1\}, r\in N$ : \[\reci{\alpha} x_{i,r} \geq x_{i+1,r} \geq
  \alpha x_{i,r}. \]
\end{definition}

Observe that the geometric mechanism is $\alpha$-differentially
private because for two adjacent inputs $i,i+1 \in N$, and any
output $r \in N, \frac{x_{i,r}}{x_{i+1,r}} \in [\alpha, 1/\alpha]$.

\subsection{Minimax Information Consumers}
\label{sec:utility}

We now discuss our model of an information consumer's utility. The
loss-function $l(i,r): N\times N \to \real$ specifies the loss of the
information consumer, given the mechanism outputs $r$ when the true
result is $i$. We only assume that the loss-function is monotone
non-decreasing in $|i-r|$, for every $i$. That is, the consumer
becomes unhappier as the returned answer is further away from the
true result. 

Consider some examples of valid loss-functions: The loss-function
$l(i,r) = |i-r|$ quantifies the mean error---for our query $Q$,
this loss-function may be a reasonable one for the government who
want to keep track of the rise of flu. The loss-function $l(i,r) =
(i-r)^2$ quantifies the variance in the error---this may be
reasonable for a drug company who wants to ensure that they don't
over-produce or under-produce the flu drug. The loss-function
$l(i,r) = \begin{cases} 0 & \text{ if } i = r \\ 1 & \text{ if } i
  \neq r\end{cases}$, measures the frequency of error.

Additionally, we will assume that the information consumer has side
information $S \subseteq N$, i.e., the information consumer knows
that the query result cannot fall outside the set $S$. For
instance, knowledge of the population of San Diego yields an
upper-bound on the query result. The drug company may also know how
many people bought its flu drug this month, yielding a lower bound
on the query result.

For any \emph{specific} input $i$, the loss-function $l$ allows us
to evaluate the information consumer's dis-utility as the expected
loss over the coin tosses of the mechanism: $\sum_{r \in N} l(i,r)
\cdot x_{i,r}$.  To quantify the overall loss, we follow the
minimax decision rule, i.e., we take the worst-case loss over all
inputs in the set $S$~\cite{minimax}. This amounts to the
information consumers being risk-averse. Hence, the dis-utility of
the mechanism $x$ to the consumer $c$ is :
\begin{equation}
  L(x) = \max_{i\in S} \sum_{r \in N} l(i,r) \cdot x_{i,r} 
\end{equation}

\subsection{Interactions of Information Consumers with Mechanisms}

As mentioned in the Introduction, information consumers actively
interact with the mechanism to induce a new mechanism; we now discuss
the mechanics of this interaction.

\subsubsection{Motivation}

The following example argues why a rational information consumer
will not accept the mechanism's output at face value.

\begin{example} \label{ex:motive} Recall the query $Q$ defined in
  the Introduction. Suppose that the information consumer is a drug
  company, who knows that $l$ individuals in San Diego bought its
  flu drug in the month of October. Thus the query result $Q$ must
  be at least $l$; the information consumer cannot conclude that
  the query result is exactly $l$ because some individuals with flu
  may have bought a competitors drug, or bought no drug at
  all. Thus it has side-information $S = \{l \ldots n\}$.

  Suppose we deploy the geometric mechanism for the query $Q$. This
  mechanism returns with non-zero probability outputs outside the
  set $\{l \ldots n\}$. Such outputs are evidently incorrect to the
  information consumer, and naturally it makes sense for the
  information consumer to map these results within the set $\{l
  \ldots n\}$. Though it is not clear what the best way of doing so
  is, a reasonable rule may be to re-interpret results less than
  $l$ as $l$, and results larger than $n$ as $n$.
\end{example}

\subsubsection{Feasible Interactions}
\label{sec:interaction}
Before we discuss the optimal way for an information consumer to
interact with the mechanism, we describe the space of feasible
interactions. On receiving a query result $r$ from the mechanism,
the consumer can reinterpret it as a different output. This
reinterpretation can be probabilistic and can be represented by a
set of probability masses $\{ T_{r,r'} : r' \in N \}$ which gives
for each result $r$, the probability that the consumer will
reinterpret it as the output $r'$. Such an interaction induces a
new mechanism for the user. Suppose the deployed mechanism is
represented by the set of probability masses $\{y_{i,r} : i,r \in
N\}$, and the induced mechanism as the probability masses $
\{x_{i,r'} : i,r \in N \}$, then $x_{i,r'} = \sum_{r \in N} y_{i,r}
\cdot T_{r,r'}$. We formalize this in a definition.

\begin{definition}[Derivability]
  \label{def:der}
  Given two mechanism $x$ and $y$, we say that mechanism $x$ can be
  \emph{derived} from $y$ if and only if, for every $r\in N$, there
  exists a set of probability masses $\{ T_{r,r'} : r' \in N \}$
  such that for every $i,r'\in N $ : $x_{i,r'} = \sum_{r \in N}
  y_{i,r} \cdot T_{r,r'}$.
\end{definition}

\subsubsection{Optimal Interactions}
\label{sec:optinter}
\label{sec:LP}

Given a deployed mechanism $y$, the optimal interaction $T^*$ is
one that minimizes the information consumer's maximum loss on the
induced mechanism. The optimal interaction can be computed using a
simple linear program. There are $n^2$ variables: one for each
$T^*_{r,r'\in N}$.  The objective function is obtained by
minimizing the loss to the consumer if it uses interaction
$T^*$. The constraints are obtained from the fact for each $r$, the
entries $T^*_{r,r'}$ form a probability distribution and hence sum
up to 1 and that all entries of $T^*$ are positive. The actual
linear program is given as :
\begin{align*}
  \text{minimize} & \qquad \max_{i\in S} \sum_{r \in N} x_{i,r} \cdot l(i,r) \\
  x_{i,r} = \sum_{r \in N} y_{i,r} \cdot T^*_{r,r'} &\qquad \forall i \in N,\forall r \in N \\
  \sum_{r' \in N} T^*_{r,r'} = 1 &\qquad \forall r \in N \\
  T^*_{r,r'}\geq 0 &\qquad \forall r \in N, \forall r' \in N
\end{align*}

\subsection{Optimal Mechanism for a Single Known Information
  Consumer}
\label{sec:opt}

Identifying the optimal mechanism for a specific consumer reduces
to the following: Identify a level of privacy $\alpha$ with which
to release the result. Find the consumer's loss-function and
side-information. Identify an $\alpha$-differentially private
mechanism such that the mechanism induced by the consumer's optimal
interaction (as described in the previous section), has the best
possible utility.

In the case of a single information consumer, we can obviate the need
for the information consumer to reinterpret the deployed mechanism's
output: Suppose there is a mechanism $y$ with post-processing $T$ that
induces a mechanism $x$. Clearly, presenting $x$ directly to the
information consumer yields at least as much utility for it. All we
have to ensure is that $x$ is $\alpha$-differentially private, and a
simple proof (omitted) shows that this is indeed so.

Thus, to identify the optimal mechanism for a specific information
user, it suffices to search over $\alpha$-differential
mechanisms. 
For a given consumer $c$ with loss-function \[ L(l,S) = \max_{i\in
  S} \sum_{r \in N} x_{i,r}\cdot l(i,r) \] and privacy parameter
$\alpha$, the optimal differentially private mechanism $M_c$ is the
solution to a simple linear program. Like in the previous section,
there are $n^2$ variables one for each matrix entry of the
mechanism $x$. The objective is to minimize the user's loss
function. The constraints are obtained by the facts that
\begin{enumerate}
\item $x$ is differentially private. So the variables $x_{i,r}$
  must satisfy Definition~\ref{def:privacy}
\item For each input $i$, elements $x_{i,r}$ form a probability
  distribution and hence sum up to 1.
\item All elements $x_{i,r}$ are positive
\end{enumerate}
Writing this as an optimization problem we get: 

\begin{align*}
\text{minimize} & \qquad \max_{i\in S} \sum_{r \in N} x_{i,r} \cdot l(i,r) \\
x_{i,r} - \alpha \cdot x_{i+1,r} \geq 0 &\qquad \forall i \in N \setminus
\{n\}, \forall r\in N \\
\alpha \cdot x_{i,r} -  x_{i+1,r} \leq 0  &\qquad \forall i \in N \setminus
\{n\}, \forall r\in N \\
\sum_{r \in N} x_{i,r} = 1 &\qquad \forall i \in N \\
x_{i,r}\geq 0 &\qquad \forall i \in N, \forall r \in N
\end{align*}

We can convert it into a Linear Program, the solution of which gives
us $x^*$.

\begin{align*}
\text{minimize} & \qquad d\\
d - \sum_{r \in N} x_{i,r} \cdot l(i,r) \geq 0 &\qquad \forall i\in S\\
x_{i,r} - \alpha \cdot x_{i+1,r} \geq 0 &\qquad \forall i \in N \setminus
\{n\}, \forall r\in N \\
\alpha \cdot x_{i,r} -  x_{i+1,r} \leq 0  &\qquad \forall i \in N \setminus
\{n\}, \forall r\in N \\
\sum_{r \in N} x_{i,r} = 1 &\qquad \forall i \in N \\
x_{i,r}\geq 0 &\qquad \forall i \in N, \forall r \in N
\end{align*}

To deploy this mechanism $x^*$, we first compute the true query
result, say $i$, then sample the perturbed result $r$ from the
distribution $\{x^*_{i,r} : \forall r \in N\}$, and release the
sampled result. Table~\ref{tab:optmech}(a) gives an example of a
optimal mechanism for a particular information consumer.

\begin{table*}[ht]
  \begin{tabular}{ccc}
    {
      $
      \begin{bmatrix}
        2/3 & 5/17 & 1/25 & 1/98  \\
        1/6 & 7/11 & 7/44 & 2/49  \\
        2/49 & 7/44 & 7/11 &  1/6  \\
        1/98 & 1/25 & 5/17 &  2/3  \\
      \end{bmatrix}
      $
    }    
    &
    {
      $ 
      \begin{bmatrix}
        4/3 &   1/4 &  1/16 &  1/48 \\
        1/3 &     1 &   1/4 &  1/12 \\
        1/12 &   1/4 &     1 &   1/3 \\
        1/48 &  1/16 &   1/4 &   4/3 \\
      \end{bmatrix}
      $        
    }
    &
    {
      $ 
      \begin{bmatrix}
        9/11 & 2/11 &    0 &    0 \\
        0 &    1 &    0 &    0 \\
        0 &    0 &    1 &    0 \\
        0 &    0 & 2/11 & 9/11 \\
      \end{bmatrix}
      $
    }
    \\ 
    {(a)The Optimal Mechanism} & { (b)$G_{3,\frac{1}{4}}$ } &    { (c)Consumer
      Interaction}\\
    & Mechanism with access to the database. &  Mechanism with access to the user parameters.
 \end{tabular}
 \caption{This shows the optimal mechanism for a consumer $c$ with loss-function $l(i,r) =
   |i-r|$ and side-information $S = \{0,1,2,3 \}$. $n = 3, \alpha = 1/4$.
 }
  \label{tab:optmech}
\end{table*}

\subsection{Optimal Mechanism for Multiple Unknown Information
  Consumers}

How can we extend the results of the previous section to multiple
consumers? The naive solution is to identify and separately deploy
the optimal mechanism for each information consumer as described in
the previous section.

There are two reasons why this is undesirable. First, the naive
solution results in the release of several re-randomizations of the
query result---this allows colluding consumers to combine their
results and cancel out the noise leading to a degradation in
privacy; see~\cite{MD} for a discussion,

Second, solving the linear program that identifies the optimal
mechanism for a user requires the knowledge of the consumer's
parameters; knowledge that is often unavailable when the decision
of which mechanism to deploy is made. Consider a report published
on the Internet. It is not clear who the information consumers are
going to be.

Our main result works around these issues successfully.

\begin{restatable}{theorem}{ummthm}
  \label{thm:umm}
  Consider a database $d$, count query $q$, $k$ consumers and
  privacy levels $\alpha_1 < \ldots < \alpha_k$.  There exists a
  mechanism $M$ that constructs $k$ results $r_1\ldots r_k$, and
  releases result $r_i$ to the $i$th information consumer, such
  that:
  
  \begin{enumerate}
  \item(Collusion-Resistance) Mechanism $M$ is $\alpha_{i'}$-differentially private
    for any set $I$ of colluding information consumers who combine
    their results. Here, $C \subseteq \{1 \ldots k\}$ and $i'= \min\{j
    : j \in C\} $.
    
  \item(Simultaneous Utility Maximization) Suppose that the $i$th
    consumer is rational and interacts optimally with the mechanism
    (as described in Section~\ref{sec:optinter}), then its utility
    is equal to that of the differentially private mechanism
    tailored specifically for it (the mechanism from
    Section~\ref{sec:opt}).
  \end{enumerate}
\end{restatable}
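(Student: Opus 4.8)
The plan is to realize $M$ as a Markov chain of geometric mechanisms and then to read off both conclusions from the chain structure, using as the single non-trivial ingredient the characterization proved in Section~\ref{sec:char}: \emph{every $\alpha$-differentially private oblivious mechanism for a count query can be derived, in the sense of Definition~\ref{def:der}, from the $\alpha$-geometric mechanism $G_\alpha$ of Definition~\ref{def:geom}}. Let $i=q(d)$ denote the true count; since $M$ will be oblivious it suffices to reason about it as a function of $i$, and differential privacy then means Definition~\ref{def:privacy}. I would also use the elementary fact that post-processing preserves Definition~\ref{def:privacy} with the same parameter: if $x$ is $\alpha$-differentially private and $x'_{i,r'}=\sum_r x_{i,r}T_{r,r'}$ for a stochastic matrix $T$, then multiplying $\alpha x_{i,r}\le x_{i+1,r}\le \alpha^{-1}x_{i,r}$ by $T_{r,r'}\ge 0$ and summing over $r$ shows $x'$ is again $\alpha$-differentially private; this extends to maps into a product alphabet and composes.

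\emph{The construction.} Sample $r_1$ from the output distribution of $G_{\alpha_1}$ on true result $i$. For $j=1,\dots,k-1$: since $\alpha_j<\alpha_{j+1}$ we have $[\alpha_{j+1},\alpha_{j+1}^{-1}]\subseteq[\alpha_j,\alpha_j^{-1}]$, so $G_{\alpha_{j+1}}$ is in particular $\alpha_j$-differentially private, and by the characterization there is a stochastic matrix $T^{(j)}=\{T^{(j)}_{r,r'}\}$ that depends only on output symbols (not on $i$) and satisfies $\sum_r (G_{\alpha_j})_{i,r}\,T^{(j)}_{r,r'} = (G_{\alpha_{j+1}})_{i,r'}$ for all $i,r'$. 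Given $r_j$, sample $r_{j+1}\sim\{T^{(j)}_{r_j,r'}\}_{r'}$, and release $r_j$ to consumer $j$. By induction on $j$ using the displayed identity, the marginal law of $r_j$ given $i$ is exactly that of $G_{\alpha_j}$ on $i$. One subtlety to flag: the sum of two independent two-sided geometric variables is not two-sided geometric, so $T^{(j)}$ is genuinely not a convolution, which is why this step is routed through Section~\ref{sec:char} rather than by writing $T^{(j)}$ by hand.

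\emph{Simultaneous utility.} Consumer $j$ observes $r_j$, whose conditional law given $i$ is that of $G_{\alpha_j}$, an $\alpha_j$-differentially private mechanism. By the characterization, the optimal $\alpha_j$-differentially private mechanism $M_{c_j}$ of Section~\ref{sec:opt} is derivable from $G_{\alpha_j}$, so there is a feasible interaction (Section~\ref{sec:optinter}) carrying $r_j$ to $M_{c_j}$; hence consumer $j$'s optimal interaction $T^*$ does at least as well as $M_{c_j}$. Conversely, by the post-processing fact above, any interaction applied to $r_j$ yields a mechanism that is still $\alpha_j$-differentially private, so by optimality of $M_{c_j}$ among such mechanisms it can do no better. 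Hence the rational consumer $j$ attains exactly the loss of the tailored mechanism $M_{c_j}$.

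\emph{Collusion resistance.} Fix a coalition $C\subseteq\{1,\dots,k\}$ and set $i'=\min C$. By construction the output tuple forms the Markov chain $i\to r_1\to r_2\to\cdots\to r_k$, so conditioning on $r_{i'}$, the variables $(r_j)_{j\in C}$---all of which have index $\ge i'$ and hence lie downstream of $r_{i'}$---have a law independent of $i$. Thus the coalition's combined view $(r_j)_{j\in C}$ is obtained from $d$ by applying $G_{\alpha_{i'}}$ to $i=q(d)$ and then a fixed randomized post-processing, so by the post-processing fact the map $d\mapsto (r_j)_{j\in C}$ is $\alpha_{i'}$-differentially private, as required. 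I expect the real work to live entirely in Section~\ref{sec:char}: once the geometric mechanism is known to dominate every same-$\alpha$ mechanism in the derivability order, the three steps above are bookkeeping on a Markov chain, the only delicate point being that each kernel $T^{(j)}$ depends on the output symbol alone---exactly what that characterization provides.
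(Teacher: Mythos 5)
There is a genuine gap, and it sits exactly where you place all the weight. Your ``single non-trivial ingredient'' --- that \emph{every} $\alpha$-differentially private oblivious mechanism can be derived from the $\alpha$-geometric mechanism --- is false, and the paper says so explicitly: Appendix~\ref{sec:nonmonotone} exhibits a $\frac{1}{2}$-differentially private mechanism that is \emph{not} derivable from $G_{3,\frac{1}{2}}$. What Theorem~\ref{thm:char} actually provides is a characterization: a differentially private mechanism is derivable from $G_{n,\alpha}$ if and only if every three consecutive column entries $x_1,x_2,x_3$ satisfy $(x_2-\alpha x_1)\geq \alpha(x_3-\alpha x_2)$, a condition that differential privacy alone does not imply. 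Consequently your utility step --- ``by the characterization, the optimal mechanism $M_{c_j}$ is derivable from $G_{\alpha_j}$'' --- does not follow; you would have to prove that an optimal mechanism satisfies the three-entry condition, and that is precisely the hard part of the paper's proof. The paper does it in two stages: Lemma~\ref{lem:perturb} shows (by a perturbation that strictly improves a lexicographic objective) that some optimal mechanism has every pair of adjacent rows tight against the privacy constraints, $\alpha x_{i,j}=x_{i+1,j}$ up to a crossover column and $x_{i,j}=\alpha x_{i+1,j}$ after it; then, assuming the three-entry condition fails for such a mechanism, a row-sum computation shows $\sum_l x_{i+2,l}>1$, contradicting stochasticity. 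None of this is bookkeeping, and your proposal omits it entirely. The same false premise also infects your construction step: $G_{\alpha_{j+1}}$ being $\alpha_j$-differentially private does not by itself make it derivable from $G_{\alpha_j}$; the paper's Lemma~\ref{lem:addpriv} instead verifies the three-entry condition directly for the columns of $G_{n,\beta}$ (a short but necessary calculation, which happily does succeed, so this part of your chain construction is salvageable).

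The remaining pieces of your argument are sound and match the paper: the post-processing closure of Definition~\ref{def:privacy}, the Markov-chain realization of Algorithm~\ref{alg:multilevel}, the downstream-conditioning argument for collusion resistance (this is Lemma~\ref{lem:multilevel}), and the ``converse'' half of the utility claim (any interaction with $r_j$ is still $\alpha_j$-differentially private, so the consumer cannot beat $M_{c_j}$). But as written the proof of Simultaneous Utility Maximization rests on a statement the paper refutes by counterexample, so the proposal is not a correct proof.
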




We now describe the release mechanism $M$. The $i$th stage of the
mechanism $M_i$ is just the $\alpha_i$-geometric mechanism. We
shall prove in Lemma~\ref{lem:addpriv}, that for any $\alpha >
\beta$, the $\alpha$-geometric mechanism can be derived from the
$\beta$-geometric mechanism: that is there is an implementable
mechanism $T_{\alpha,\beta}$ such that if we use $T_{\alpha,\beta}$
to reinterpret results given by the $\beta$-geometric mechanism, we
get the $\alpha$-geometric mechanism. The query results $r_i$ are
not generated independently of each other, they are obtained by
successive perturbations: the result $r_i$ of mechanism $M_i$ is
given as input to the mechanism $T_i =
T_{\alpha_i,\alpha_{i+1}}$. Hence, the $(i+1)$th stage mechanism
$M_{i+1}$ is just the $\alpha_{i+1}$-geometric mechanism. This
specifies how the noise added to the query results is corelated. We
describe the mechanism formally in
Algorithm~\ref{alg:multilevel}. In Section~\ref{sec:multilevel} we
show that it is collusion-resistant.

Consumer $i$ interacts optimally with the published query result
$r_i$ to get a result tailored specifically for it. In
Section~\ref{sec:umm}, we prove that the interaction yields optimal
utility for the consumer. The main idea is that the optimal
mechanism can be factored into two parts -- The first is a database
specific mechanism which has access to the database but not to the
user parameters. In our case this is the $\alpha_i$-geometric
mechanism. The second is the user specific mechanism, which has
access to the user loss-function and side-information and the
perturbed query result (given by the first mechanism), but not to
the database itself. Table~\ref{tab:optmech} shows these two
factors of the optimal mechanism discussed in
Section~\ref{sec:opt}.

We briefly discuss proof techniques: Section~\ref{sec:char}
completely characterizes mechanisms derivable from the geometric
mechanism using linear algebraic
techniques. Section~\ref{sec:applications} applies this
characterization twice: the first application shows that a
$\alpha$-geometric mechanism can be derived by re-randomizing the
output of a $\beta$-geometric mechanism so long as $\alpha >
\beta$. The second application shows that the mechanism induced by
the interaction of a rational information consumer with the
geometric mechanism is an optimal solution to the linear program
mentioned in Section~\ref{sec:opt}.

\subsection{Comparison with Bayesian Information Consumers}

An alternative to the Minimax decision rule is the Bayesian
decision rule. Ghosh et al.~\cite{Bayesian} prove an analogous
result to Theorem~\ref{thm:umm} for all Bayesian information
consumers. We briefly compare the models and the proof techniques.

The main distinction between the two models is their treatment of
side-information. The Bayesian model requires agents to have a
prior over all possible scenarios (true query results). Often, in
practice, agents do not behave consistent with the preferences of
the Bayesian model, perhaps because they find it hard to come up
with meaningful priors~\cite[Example 6.B.2,6.B.3]{BigFat}, or are
genuinely risk-averse~\cite[Section 6.3]{BigFat}.


As discussed in~\cite{Bayesian}, Bayesian information consumers
employ deterministic post-processing, unlike minimax information
consumers which employ randomized post-processing (For example, see
Table~\ref{tab:optmech}). Handling this extra complexity requires
us to construct a broader characterization of mechanisms derivable
from the geometric mechanism---Section~\ref{sec:char} presents a
complete characterization in terms of a simple condition on the
probability masses $x_{i-1,j},\: x_{i,j},\: x_{i+1,j}$. Our proof
avoids the LP based techniques and counting arguments
of~\cite{Bayesian}, and consequentially strictly generalizes and
gives a simpler proof of the main result of that paper. In
addition, our characterization enables us to release data at
multiple levels of privacy in a collusion-resistant manner.

\subsection{Related Work}
\label{sec:related}

A recent thorough survey of the state of the field of differential
privacy is given in~\cite{Survey08}. Dinur and
Nissim~\cite{Impossibility}, Dwork et al.~\cite{Decoding} establish
upper-bounds on the number of queries that can be answered with
reasonable accuracy. Most of the differential privacy literature
circumvents these impossibility results by focusing on interactive
models where a mechanism supplies answers to only a sub-linear (in
$n$) number of queries. Count queries
(e.g.~\cite{Impossibility,Datamining}) and more general queries
(e.g.~\cite{BigBang,Smooth}) have been studied from this
perspective.

Hardt and Talwar~\cite{Geometry} give tight upper and lower bounds
on the amount of noise needed to ensure differential privacy for
$d$ non-adaptive linear queries, where the database is a vector in
$\real^n$. Hay et al.~\cite{Boosting} give a way to increase
accuracy of answering multiple related queries while ensuring that
the query results follow consistency constraints. 

Blum et al.~\cite{Learning} focus attention to count queries that
lie in a restricted class; they obtain non-interactive mechanisms
that provide simultaneous good accuracy (in terms of worst-case
error) for all count queries from a class with polynomial VC
dimension. Kasiviswanathan et al.~\cite{Kavi08} give further
results for privately learning hypotheses from a given class.

The use of abstract ``utility functions'' in McSherry and
Talwar~\cite{MD} has a similar flavor to our use of loss-functions,
though the motivations and goals of their work and ours are
unrelated.  Motivated by pricing problems, McSherry and
Talwar~\cite{MD} design differentially private mechanisms for
queries that can have very different values on neighboring
databases (unlike count queries); they do not consider users with
side information and do not formulate a notion of mechanism
optimality (simultaneous or otherwise).

Our formulation of the multiple privacy levels is similar to Xiao
et al.~\cite{Multilevel}. However, they use random output
perturbations for preserving privacy, and do not give formal
guarantees about differential privacy.

\section{Characterizing Mechanisms\\  Derivable from the Geometric
  Mechanism}
\label{sec:char}

In this section we give a characterization of all mechanisms that
can be derived from the geometric mechanism. Recall that
differential privacy imposes conditions on every two consecutive
entries $(x_1, x_2)$ of every column: $x_1 \geq \alpha x_2$ (and
$x_2 \geq \alpha x_1$). Our characterization imposes syntactically
similar conditions on every three consecutive entries
$(x_1,x_2,x_3)$ in a column: $(x_2 - \alpha\cdot x_3) \geq \alpha
(x_1 - \alpha\cdot x_2)$. Neither condition implies the other. This
characterization is both necessary and sufficient for any
differentially private mechanism to be derivable from the geometric
mechanism. 

We defined feasible consumer interactions in
Section~\ref{sec:interaction}. A slightly different way of
representing these is to arrange the probability masses in a
$n\times n$ matrix $(T_{r,r'})_{r, r' \in N}$. We say that a matrix
is (row) stochastic if the sum of elements in each row is $1$ and
all elements are non-negative. We say that a matrix is a
generalized (row) stochastic matrix if the if the sum of elements
in each row is $1$, but with no condition on individual entries. If
the deployed mechanism is given by the matrix $y$, and the
reinterpretation by the matrix $T$, then the new mechanism is given
by the matrix $x = y \cdot T$.

We define a version of the Geometric Mechanism whose range is
restricted to $\{0,\ldots,n\}$, which will be easier to work with
since it can be easily represented as a matrix.

\begin{definition}[Range-Restricted Geometric Mechanism] 
  \label{def:truncgeom} 
  For a given privacy parameter $\alpha$, when the true query
  result is $k \in [0,n]$, the mechanism outputs $Z(k)$ where
  $Z(k)$ is a random variable with the following distribution for
  each integer $z$:
  \[
  Pr[Z(k) = z] =\begin{cases}
    \frac{1}{1+\alpha}\cdot \alpha^{|z-k|}  & \text{ if } z \in \{0,n \} \\
    \frac{1-\alpha}{1+\alpha} \cdot  \alpha^{|z-k|}  & \text{ if } 0 < z < n \\
    0 & \text{otherwise}.

  \end{cases}\]
  
\end{definition}

This mechanism is equivalent to the geometric mechanism in the
sense that we can derive this from the geometric mechanism and
derive the geometric mechanism from its range-restricted
version. We shall refer to both as the Geometric Mechanism and
denote the matrix by $G_{n,\alpha}$. (Table~\ref{tab:geom}).

For ease of notation, we shall denote by $G'_{n,\alpha}$ the matrix
obtained by multiplying the columns $1$ and $n$ of $G_{n,\alpha}$
by $(1+\alpha)$ and all other entries by
$\frac{1+\alpha}{1-\alpha}$. Table~\ref{tab:geom} shows the
matrices of $G_{n,\alpha}$ and $G'_{n,\alpha}$. We are now ready to
state the characterization.

\begin{table*}[ht]
  \begin{tabular}{cc}
    {
      $\frac{1-\alpha}{1+\alpha}
      \begin{bmatrix}
        \frac{1}{1-\alpha}\cdot 1        & \alpha & \alpha^2 &\ldots & \frac{1}{1-\alpha}\cdot\alpha^{n-1}\\ 
        \frac{1}{1-\alpha}\cdot\alpha   & 1      & \alpha   & \ldots & \frac{1}{1-\alpha}\cdot\alpha^{n-2}\\ 
        \frac{1}{1-\alpha}\cdot\alpha^2 & \alpha & 1        & \ldots \\
        \vdots & & & \ddots \\
        \frac{1}{1-\alpha}\cdot\alpha^{n-1} & \alpha^{n-2} & &\ldots &
        \frac{1}{1-\alpha}\cdot 1\\ 
      \end{bmatrix}$        
    }
    &
    {
      $
      \begin{bmatrix}
        1        & \alpha & \alpha^2 &\ldots & \alpha^{n-1}\\ 
        \alpha   & 1      & \alpha   & \ldots & \alpha^{n-2}\\ 
        \alpha^2 & \alpha & 1        & \ldots \\
        \vdots & & & \ddots \\
        \alpha^{n-1} & \alpha^{n-2} & &\ldots &   1\\ 
      \end{bmatrix}
      $
    } \\    
    $G_{n,\alpha}$ & $G'_{n,\alpha}$ \\
  \end{tabular}
  \caption{The Range Restricted Geometric Mechanism}
  \label{tab:geom}
\end{table*}

\begin{theorem}
  \label{thm:char}
  Suppose $M$ is any oblivious differentially private
  mechanism. Then $M$ can be derived from the geometric mechanism
  if and only if every three consecutive entries $x_1,x_2,x_3$ in
  any column of $M$ satisfy $ (x_2 - \alpha x_1) \geq \alpha(x_3 -
  \alpha x_2)$.
\end{theorem}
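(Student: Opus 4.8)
The plan is to translate ``$M$ is derivable from the geometric mechanism'' into the matrix identity $M = G_{n,\alpha}\cdot T$ for some row-stochastic $T$ (this is exactly Definition~\ref{def:der} with $y = G_{n,\alpha}$), and then to exploit the fact that $G_{n,\alpha}$ is invertible, so $T = G_{n,\alpha}^{-1}M$ is the \emph{unique} candidate. The whole theorem then reduces to a per-column statement: a column $c = (c_0,\ldots,c_n)$ of $M$ arises as $G_{n,\alpha}$ applied to a nonnegative vector if and only if its consecutive triples obey the stated inequality, together with two boundary inequalities that I will get for free from differential privacy. Row-stochasticity of $T$ is then automatic: $G_{n,\alpha}\mathbf{1} = \mathbf{1}$ because $G_{n,\alpha}$ is a mechanism and $M\mathbf{1} = \mathbf{1}$ because $M$ is, so $T\mathbf{1} = G_{n,\alpha}^{-1}M\mathbf{1} = \mathbf{1}$; and nonnegativity of the columns of $T$ together with $T\mathbf 1 = \mathbf 1$ forces every entry into $[0,1]$. (For $\alpha = 1$ the statement is immediate, since differential privacy makes $M$ independent of the input; so assume $\alpha \in [0,1)$, where $G_{n,\alpha}$, and likewise $G'_{n,\alpha}$, is invertible.)

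For the ``only if'' direction I would argue directly on $M = G_{n,\alpha}T$ with $T$ nonnegative and row-stochastic. Each column of $M$ is a nonnegative linear combination of columns of $G_{n,\alpha}$, and the condition $(x_2 - \alpha x_1) \ge \alpha(x_3 - \alpha x_2)$ rewrites as the \emph{linear} inequality $(1+\alpha^2)x_2 - \alpha x_1 - \alpha x_3 \ge 0$ in the three coordinates; such an inequality is preserved under nonnegative combinations of vectors. So it suffices to verify that every column of $G_{n,\alpha}$ satisfies it, and since $G'_{n,\alpha}$ is obtained from $G_{n,\alpha}$ by a positive rescaling of each column (which changes nothing, as the inequality is homogeneous within a column), I may work with the columns $(\alpha^{|i-j|})_i$ of $G'_{n,\alpha}$. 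A short case analysis on where the peak $i=j$ sits relative to the triple shows the inequality always holds — in fact with equality on every triple not containing the peak, and at the peak it becomes $1+\alpha^2 \ge 2\alpha^2$, which holds.

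For the ``if'' direction the real work is computing $G_{n,\alpha}^{-1}$. Since $G_{n,\alpha}$ and $G'_{n,\alpha}$ differ by multiplication by a positive diagonal matrix, $G_{n,\alpha}^{-1}c$ and $(G'_{n,\alpha})^{-1}c$ have the same signs componentwise, so it is enough to invert $G'_{n,\alpha} = (\alpha^{|i-j|})$. This matrix has a \emph{tridiagonal} inverse: up to the positive factor $1/(1-\alpha^2)$ it has super- and sub-diagonal entries $-\alpha$, interior diagonal entries $1+\alpha^2$, and corner diagonal entries $1$ — a one-line verification by direct multiplication. Reading off the coordinates of $(G'_{n,\alpha})^{-1}c$: interior coordinate $i$ is a positive multiple of $(1+\alpha^2)c_i - \alpha c_{i-1} - \alpha c_{i+1}$, nonnegative exactly by the hypothesis applied to the triple $(c_{i-1},c_i,c_{i+1})$; and the two corner coordinates are positive multiples of $c_0 - \alpha c_1$ and $c_n - \alpha c_{n-1}$, which are nonnegative because $M$ is $\alpha$-differentially private (Definition~\ref{def:privacy} with $i=0$ and $i=n-1$). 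Hence every column of $T = G_{n,\alpha}^{-1}M$ is nonnegative, and by the first paragraph $T$ is row-stochastic, so $M = G_{n,\alpha}T$ exhibits $M$ as derived from the geometric mechanism.

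The main obstacle I anticipate is this last direction: getting the inverse of $G_{n,\alpha}$ exactly right and cleanly handling the two boundary rows/columns, where $G_{n,\alpha}$ is scaled differently from its interior — this is precisely why the characterization concerns only \emph{interior} triples and why differential privacy, rather than the triple condition, is what supplies the two endpoint inequalities. A secondary point that needs care is the bookkeeping in the first paragraph: confirming that the ``rows sum to one'' constraint on $T$ follows automatically and imposes nothing beyond per-column nonnegativity.
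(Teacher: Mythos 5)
Your proposal is correct, and its skeleton matches the paper's: both reduce derivability to nonnegativity of the unique candidate $T = G_{n,\alpha}^{-1}M$ (uniqueness from invertibility of $G_{n,\alpha}$, row-stochasticity for free from $G_{n,\alpha}\mathbf{1}=M\mathbf{1}=\mathbf{1}$). Where you differ is in the two technical steps. For the ``if'' direction, the paper never writes down $G_{n,\alpha}^{-1}$ explicitly; it evaluates each entry of $T$ by Cramer's rule, computing $\det G_{n,\alpha}(i,m_j)$ via column eliminations (Lemma~\ref{lem:detreplace}) and reading off its sign. Your explicit tridiagonal inverse of $G'_{n,\alpha}=(\alpha^{|i-j|})$ produces exactly the same quantities---interior rows give positive multiples of $(1+\alpha^2)x_i-\alpha(x_{i-1}+x_{i+1})$, corner rows give $x_0-\alpha x_1$ and $x_n-\alpha x_{n-1}$, the latter supplied by differential privacy---and is arguably the cleaner packaging once the one-line multiplication check is done; the paper's determinant computation is the Cramer-level shadow of the same fact. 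For the ``only if'' direction you diverge more genuinely: the paper argues via uniqueness (if the triple condition fails, the forced $T$ has a negative entry, so no valid post-processing exists), whereas you note the condition is a homogeneous linear inequality preserved under nonnegative combinations and verify it directly on the columns of $G_{n,\alpha}$---essentially the same verification the paper performs later inside Lemma~\ref{lem:addpriv}---which avoids invoking uniqueness for that direction altogether. Two cosmetic remarks: in your case analysis, triples containing the peak at an \emph{endpoint} also yield equality (only the peak-in-the-middle case is strict, giving $1-\alpha^2\ge 0$), and your explicit handling of the degenerate $\alpha\in\{0,1\}$ endpoints is a detail the paper leaves implicit.
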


The key insight is to think of each column in $M$ and in
$G_{n,\alpha}$ as a vector. Looking at the problem through this
linear algebraic lens, we see that deriving $M$ from $G_{n,\alpha}$
amounts to proving that each column of $M$ lies in the convex hull
of the (vectors which form the) columns of $G_{n,\alpha}$. In
Lemma~\ref{lem:detg}, we show that $G_{n,\alpha}$ is non-singular,
hence each column of $M$ can be represented as a linear combination
of columns of $G_{n,\alpha}$.

\begin{restatable}{lemma}{nonsingular}
  \label{lem:detg}
  $\det(G_{n,\alpha}) > 0$. 
\end{restatable}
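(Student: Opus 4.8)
The plan is to reduce to the Toeplitz matrix $G'_{n,\alpha}$ and compute its determinant by a single elementary row operation. First I would record the factorization implicit in Definition~\ref{def:truncgeom}: writing $c=\tfrac{1-\alpha}{1+\alpha}>0$ (we take $\alpha\in[0,1)$, as required for the geometric distribution to be well-defined), we have $G_{n,\alpha}=c\,G'_{n,\alpha}\,D$, where $D$ is the diagonal matrix that multiplies the two boundary columns ($0$ and $n$) by $\tfrac{1}{1-\alpha}$ and fixes the interior columns. Since $c>0$, $\det D=\tfrac{1}{(1-\alpha)^2}>0$, and determinants are multiplicative, it suffices to show $\det(G'_{n,\alpha})>0$, where $G'_{n,\alpha}$ is the $(n+1)\times(n+1)$ matrix with entries $(G'_{n,\alpha})_{i,j}=\alpha^{|i-j|}$.

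Next I would compute $\det(G'_{n,\alpha})$ exactly. Let $L$ be the lower bidiagonal matrix with $1$'s on the diagonal and $-\alpha$ on the subdiagonal, so $\det L=1$ and hence $\det(G'_{n,\alpha})=\det(L\,G'_{n,\alpha})$. A direct index computation shows that $U:=L\,G'_{n,\alpha}$ is upper triangular: for $i\ge 1$ and $j\le i-1$ one gets $(U)_{i,j}=\alpha^{i-j}-\alpha\cdot\alpha^{i-1-j}=0$; on the diagonal $(U)_{0,0}=1$ and $(U)_{i,i}=1-\alpha^2$ for $i\ge 1$; and for $j\ge i+1$, $(U)_{i,j}=\alpha^{j-i}(1-\alpha^2)$. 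Therefore $\det(G'_{n,\alpha})=\det U=(1-\alpha^2)^{n}$, which is strictly positive for $\alpha\in[0,1)$ (and equals $1$ when $\alpha=0$). Combining with the first paragraph gives $\det(G_{n,\alpha})=c^{\,n+1}\cdot\tfrac{1}{(1-\alpha)^2}\cdot(1-\alpha^2)^{n}>0$, as claimed.

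The only place requiring care is the row-operation step: one must verify that left multiplication by $L$ simultaneously annihilates every entry strictly below the diagonal while scaling the diagonal by $1-\alpha^2$ (the first row being the exception, retained unchanged). This is a routine case split on whether the column index is $<i$, $=i$, or $>i$, using $|i-j|-|i-1-j|\in\{-1,+1\}$ accordingly; I do not expect any genuine obstacle. As an alternative self-contained route one could instead expand along the first row to obtain the recursion $\det(G'_{n,\alpha})=(1-\alpha^2)\det(G'_{n-1,\alpha})$ with base case $\det(G'_{1,\alpha})=1-\alpha^2$, or simply observe that $G'_{n,\alpha}$ is the covariance matrix of a stationary AR(1) process and is therefore positive definite; but the bidiagonal elimination has the advantage of producing the exact value of the determinant, which may be reused elsewhere.
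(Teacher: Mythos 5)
Your proposal is correct and follows essentially the same route as the paper: reduce to the Toeplitz matrix $G'_{n,\alpha}$ by positive column scalings, then show its determinant equals a power of $(1-\alpha^2)$ by elementary elimination (the paper does this inductively via the single column operation $C_1 \leftarrow C_1 - \alpha C_2$ and cofactor expansion, which your one-shot bidiagonal triangularization by $L$ simply packages into a single LU step). The only differences are cosmetic: an indexing convention on the size of $G_{n,\alpha}$, and your explicit restriction to $\alpha<1$, which is needed for the statement to hold and is implicit in the paper.
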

\begin{proof}
  Since $G'_{n,\alpha}$ can be obtained by multiplying each entry
  in the first and last column of $G_{n,\alpha}$ by $(1+\alpha)$
  and entries in all other columns by $\frac{1+\alpha}{1-\alpha}$,
  $\det{G'_{n,\alpha}} = (1+\alpha)^2
  (\frac{1+\alpha}{1-\alpha})^{n-2}\det{G_{n,\alpha}}$. Hence, we
  only need to prove that $\det{G'_{n,\alpha}} > 0$. We prove this
  by induction on $n$. For $n=2$, we explicit calculation yields
  $G'_{2,\alpha} = (1-\alpha^2)$.  For the general case, perform
  the column transformation $C_1 \leftarrow C_1 - \alpha C_2$ on
  $G'_{n,\alpha}$. Expanding on the first column gives us
  $\det{G'_{n,\alpha}} = (1-\alpha^2)\det{G'_{n-1,\alpha}}$. Hence,
  by induction, $\det{G'_{n,\alpha}} = (1-\alpha^2)^{n-1}$.
\end{proof}

We need to show that each column of $M$ is actually a convex
combination of columns of $G$. We can write $M = G_{n,\alpha} \cdot
T$ for some matrix $T$. Hence, $T = G_{n,\alpha}^{-1} \cdot
M$. Note that $G_{n,\alpha}$ and $M$ are both generalized
stochastic matrices. Since the set of all non-singular generalized
stochastic matrices forms a group~\cite{stochastic},
$G_{n,\alpha}^{-1}$ is a generalized stochastic matrix. And since
generalized stochastic matrices are closed under multiplication,
$T$ is also a generalized stochastic matrix and is uniquely
defined. All we need to prove is that all entries in $T$ are
non-negative. We shall use Cramer's Rule to calculate the entries
of $T$ and complete the proof.

Given a $n\times n$ matrix $G$ and a vector $x =
(x_1,\ldots,x_n)^t$, define $G(i,x)$ as the matrix where the
$i^{th}$ column of $G$ has been replaced by $x$.
  
Let $t_j$ be the $j^{th}$ column of $T$. $t_{i,j}$ denotes the
$i,j$ entry in $T$. Observe that, $G_{n,\alpha} \cdot t_j =
m_j$. By Cramer's Rule, we get that $t_{i,j} =
\frac{\det{G_{n,\alpha}(i,m_j)}}{ \det{G_{n,\alpha}}}$. To
calculate this, we shall explicitly calculate the value of
$\det{G_{n,\alpha}(i,m_j)}$.

\begin{restatable}{lemma}{determinantreplace}
  \label{lem:detreplace}
  Given $G_{n,\alpha}$ and a vector $x =
  (x_1,\ldots,x_n)^t$: 
  \begin{enumerate}
  \item $\det{G_{n,\alpha}(1,x)} > 0$ iff $x_1 > \alpha x_2$
  \item $\det{G_{n,\alpha}(n,x)} > 0$ iff $x_n > \alpha x_{n-1}$
  \item $\det{G_{n,\alpha}(i,x)} \geq 0$ if and only if $ (x_2 -
    \alpha x_1) \geq \alpha(x_3 - \alpha x_2)$ : For $ 2 \leq i
    \leq n-1$
  \end{enumerate}
\end{restatable}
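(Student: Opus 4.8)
The plan is to compute $\det G_{n,\alpha}(i,x)$ almost exactly and read off its sign. Three reductions make this manageable. First, it is cleaner to work with $G'_{n,\alpha}$, whose $(p,q)$ entry is simply $\alpha^{|p-q|}$. Since $G'_{n,\alpha}$ is obtained from $G_{n,\alpha}$ by scaling column $q$ by a fixed positive constant ($1+\alpha$ if $q\in\{1,n\}$, and $\frac{1+\alpha}{1-\alpha}$ otherwise, assuming $\alpha<1$), the matrix $G'_{n,\alpha}(i,x)$ is obtained from $G_{n,\alpha}(i,x)$ by scaling the columns $q\neq i$ by those same positive constants while leaving column $i$ (which equals $x$ in both) untouched. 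Hence $\det G_{n,\alpha}(i,x)$ and $\det G'_{n,\alpha}(i,x)$ have the same sign, and it suffices to analyze $\det G'_{n,\alpha}(i,x)$.

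Second, I exploit the same sparse-cancellation trick that powered Lemma~\ref{lem:detg}. Writing $g_1,\dots,g_n$ for the columns of $G'_{n,\alpha}$, a one-line check gives $g_q-\alpha g_{q+1}=(1-\alpha^2)(\alpha^{q-1},\dots,\alpha,1,0,\dots,0)^t$, supported on rows $1,\dots,q$, and $g_{q+1}-\alpha g_q=(1-\alpha^2)(0,\dots,0,1,\alpha,\dots,\alpha^{\,n-q-1})^t$, supported on rows $q+1,\dots,n$. Iterating ``subtract $\alpha$ times the neighbouring column, then subtract $\alpha$ times the newly created neighbour'' (sweeping inward toward column $i$ from each side) collapses any maximal run of consecutive columns of $G'_{n,\alpha}$ into scalar multiples $(1-\alpha^2)e_q$ of the standard basis vectors --- except that the column immediately adjacent to column $i$ cannot be collapsed, since that step would require $g_i$, which is now $x$. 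All these operations merely add a multiple of one column to another, so the determinant is unchanged.

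Third, I run this on $G'_{n,\alpha}(i,x)$. For $2\le i\le n-1$: collapse the run $g_1,\dots,g_{i-1}$ and, independently, the run $g_{i+1},\dots,g_n$; what remains are the columns $(1-\alpha^2)e_q$ for $q\in\{1,\dots,i-2\}\cup\{i+2,\dots,n\}$, the two un-collapsed columns $g_{i-1},g_{i+1}$, and the column $x$. Using the $e_q$ columns to zero out rows $\{1,\dots,i-2\}\cup\{i+2,\dots,n\}$ of the other three columns exhibits the matrix, up to an obvious simultaneous reordering of rows and columns, as block-diagonal with an $(n-3)\times(n-3)$ block $(1-\alpha^2)I$ and one nontrivial $3\times3$ block on rows/columns $\{i-1,i,i+1\}$,
\[
\begin{pmatrix} 1 & x_{i-1} & \alpha^2\\ \alpha & x_i & \alpha \\ \alpha^2 & x_{i+1} & 1 \end{pmatrix}
\]
(columns ordered $i-1,i,i+1$). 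Expanding this $3\times3$ determinant along its middle column gives $(1-\alpha^2)\bigl((1+\alpha^2)x_i-\alpha x_{i-1}-\alpha x_{i+1}\bigr)$, so
\[
\det G'_{n,\alpha}(i,x)=(1-\alpha^2)^{\,n-2}\Bigl((x_i-\alpha x_{i-1})-\alpha\,(x_{i+1}-\alpha x_i)\Bigr),
\]
and since $(1-\alpha^2)^{\,n-2}>0$ the sign is as claimed (the three consecutive entries named generically $x_1,x_2,x_3$ in the statement, as in Theorem~\ref{thm:char}, are here $x_{i-1},x_i,x_{i+1}$). For $i=1$ there is only a right run and the surviving block is $\bigl(\begin{smallmatrix} x_1 & \alpha\\ x_2 & 1\end{smallmatrix}\bigr)$, giving $\det G'_{n,\alpha}(1,x)=(x_1-\alpha x_2)(1-\alpha^2)^{\,n-2}$, hence $\det G_{n,\alpha}(1,x)>0$ iff $x_1>\alpha x_2$; the case $i=n$ is symmetric by reversing the index order.

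The ideas are exactly those of Lemma~\ref{lem:detg}, so the only real obstacle is bookkeeping: the eliminations must be ordered so that each column being subtracted is still in its pristine ``staircase'' form, and the boundary indices $i\in\{1,n\}$ (a single neighbouring column) and the smallest values of $n$ (where the $3\times3$ or $2\times2$ block is essentially the whole matrix) have to be verified as separate but routine cases. A slicker alternative I would mention but not lean on: $(G'_{n,\alpha})^{-1}$ is the classical tridiagonal matrix $\frac{1}{1-\alpha^2}$ times the one with diagonal $(1,1+\alpha^2,\dots,1+\alpha^2,1)$ and off-diagonals $-\alpha$; cofactor expansion of $\det G'_{n,\alpha}(i,x)$ along column $i$ equals $\det G'_{n,\alpha}\cdot\bigl((G'_{n,\alpha})^{-1}x\bigr)_i$, and the lemma's three cases are the three shapes of a coordinate of $(G'_{n,\alpha})^{-1}x$.
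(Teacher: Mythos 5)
Your proof is correct and follows essentially the same route as the paper's: pass to $G'_{n,\alpha}$ via positive column scalings, use column operations of the form $C_q \leftarrow C_q - \alpha C_{q\pm1}$ to collapse everything except a small block around column $i$, and read the sign off the resulting $3\times 3$ (or $2\times 2$) determinant, arriving at the identical value $(1-\alpha^2)^{n-2}\bigl((1+\alpha^2)x_i - \alpha(x_{i-1}+x_{i+1})\bigr)$. The only difference is that you spell out the order of the eliminations and the block-diagonal bookkeeping more explicitly than the paper does.
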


Hence, when $M$ satisfies the condition that for every three
consecutive entries $x_1,x_2,x_3$ in any column $ (x_2 - \alpha
x_1) \geq \alpha(x_3 - \alpha x_2)$, then $s_{i,j} \geq 0$ for all
$i,j$.  This proves that $M$ can be derived from the geometric
mechanism.

To prove the converse, suppose that there is a column $c$ and row
$i$ of $M$ such that $((1+\alpha^2)m_{i,j} -\alpha(m_{i-1,j} +
m_{i+1,j})) < 0$, then $s_{i,c} = \det{G(i,m_c)} / \det{G} <
0$. This says that $M$ cannot be derived from $G$.  This completes
the proof of Theorem~\ref{thm:char}. {$\hfill \qed$\\}


We now prove Lemma~\ref{lem:detreplace}, using similar column
transformations as we used in Lemma~\ref{lem:detg} to calculate
$\det{G_{n,\alpha}(i,x)}$ for an arbitrary vector $x$.

\determinantreplace*
\begin{proof} We will prove the above properties for
  $G'_{n,\alpha}$. Since, $G'_{n,\alpha}$ is obtained from
  $G_{n,\alpha}$ by multiplying columns with positive reals, the
  properties above will continue to hold for $G_{n,\alpha}$. We
  divide the proof into cases depending on the value of $i$ :
\begin{enumerate}
\item{$i = 1$ : } We repeatedly do the column transformation $C_n
  \leftarrow C_n - \alpha C_{n-1}$ to get that
  $\det{G'_{n,\alpha}(1,x)} = (1-\alpha^2)^{n-2}
  \begin{vmatrix}x_{1} & \alpha \\ x_2 & 1\end{vmatrix}$ =
  $(1-\alpha^2)^{n-2}(x_1 - \alpha x_2)$. Hence,
  $\det{G'_{n,\alpha}(1,x)} > 0 \iff ( x_1 > \alpha x_2 )$.

\item{$i = n$ :} We can do the same column transformations to get
  that $\det{G'_{n,\alpha}(n,x)} = (1-\alpha^2)^{n-2}
  \begin{vmatrix}1 & x_{n-1} \\ \alpha & x_n\end{vmatrix}$ = $
  (1-\alpha^2)^{n-2}(x_n - \alpha x_{n-1})$. Hence,
  $\det{G'_{n,\alpha}(n,x)} > 0 \iff (x_n > \alpha x_{n-1})$.

\item{$2 \leq i \leq n-1$ : } Similarly, for the general case we
  get that $\det{G'_{n,\alpha}(i,x)} = (1-\alpha^2)^{n-3} 
  \begin{vmatrix}
    1 & x_{i-1} & \alpha^2 \\ 
    \alpha &x_i & \alpha \\
    \alpha^2 & x_{i+1} & 1
  \end{vmatrix} = (1-\alpha^2)^{n-2}((1+\alpha^2)x_i
  -\alpha(x_{i-1} + x_{i+1}))$. Hence, $\det{G'_{n,\alpha}(i,x)} >
  0 \iff (x_2 - \alpha x_1) \geq \alpha(x_3 - \alpha x_2)$.
\end{enumerate}
\end{proof}

\section{Applications of the Characterization}
\label{sec:applications}

We show two applications of the characterization result of
Theorem~\ref{thm:char}. The first one gives us a way to
simultaneously release data to consumers at different levels of
privacy. As a second application we show how to obtain a optimal
mechanism for an information consumer without knowing its
parameters.

\subsection{Information-Consumers at Different Privacy Levels}
\label{sec:multilevel}

Suppose we want to release the answer of the query to different
information consumers. We represent the level of privacy of a
consumer $c$ by the privacy parameter $\alpha_c$. Given true result
$r$, we will release $r_c$ to consumer $c$ such that the mechanism
is $\alpha_c$-differentially private. We expect that consumers at
different levels of privacy do not share query results with each
other which is enforced via, say, non-disclosure agreements. Even
when they do share data, we want our mechanism to be
collusion-resistant and not leak privacy-- the colluding group
should not get any more information about the database than the
consumer with access to the least private result i.e., the one with
the smallest $\alpha$ .

We now describe a mechanism that achieves this. The next lemma
gives us a way to ``add'' more privacy to an existing geometric
mechanism.

\begin{restatable}{lemma}{addprivacy}
  \label{lem:addpriv}
  For two privacy parameters $\alpha \leq \beta$, the geometric
  mechanism $G_{n,\beta}$ can be derived from the mechanism
  $G_{n,\alpha}$ i.e., there exists a stochastic matrix
  $T_{\alpha,\beta}$ such that $G_{n,\beta} = G_{n,\alpha} \cdot
  T_{\alpha,\beta}$.
\end{restatable}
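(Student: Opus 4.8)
The plan is to obtain Lemma~\ref{lem:addpriv} directly from the characterization in Theorem~\ref{thm:char}. It will suffice to verify that $G_{n,\beta}$, regarded as an oblivious mechanism, (i) is $\alpha$-differentially private, and (ii) has the property that every three consecutive entries $x_1,x_2,x_3$ in a column satisfy $(x_2 - \alpha x_1) \ge \alpha(x_3 - \alpha x_2)$. Granting these, Theorem~\ref{thm:char} produces a generalized stochastic matrix $T_{\alpha,\beta}$ with $G_{n,\beta} = G_{n,\alpha}\cdot T_{\alpha,\beta}$, and the Cramer's-rule argument behind that theorem already shows every entry of $T_{\alpha,\beta}$ is non-negative, so $T_{\alpha,\beta}$ is genuinely stochastic --- which is exactly the assertion of the lemma.

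For (i): $G_{n,\beta}$ is oblivious by construction, and the ratio of two consecutive entries in any column of $G_{n,\beta}$ lies in $[\beta,1/\beta]$. Since $0 \le \alpha \le \beta \le 1$ we have $[\beta,1/\beta]\subseteq[\alpha,1/\alpha]$, so $G_{n,\beta}$ is $\alpha$-differentially private; in particular the boundary inequalities $m_{1,j}\ge\alpha m_{2,j}$ and $m_{n,j}\ge\alpha m_{n-1,j}$ needed by parts (1)--(2) of Lemma~\ref{lem:detreplace} hold. For (ii): I would fix a column $j$. Up to a single positive constant that is uniform within the column (and which also absorbs the boundary rescaling of the output columns $0$ and $n$ in Definition~\ref{def:truncgeom}), the entries of column $j$ down the rows $i\in N$ are $\beta^{|j-i|}$; since the inequality $(x_2-\alpha x_1)\ge\alpha(x_3-\alpha x_2)$ --- equivalently $(1+\alpha^2)x_2\ge\alpha(x_1+x_3)$ --- is homogeneous of degree one in the column, this constant is irrelevant. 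For three consecutive rows $i-1,i,i+1$ I would split on the location of the ``peak'' $j=i$. If $j\le i-1$ or $j\ge i+1$, then $x_1,x_2,x_3$ form a geometric progression of ratio $\beta$ or $1/\beta$, and the inequality reduces (after clearing denominators) to $(1+\alpha^2)\beta\ge\alpha(1+\beta^2)$, i.e.\ $(\beta-\alpha)(1-\alpha\beta)\ge 0$, which holds since $\alpha\le\beta$ and $\alpha\beta\le1$. If $j=i$, then $x_1=x_3=\beta x_2$ and the inequality becomes $1+\alpha^2\ge 2\alpha\beta$, which holds because $2\alpha\beta\le 2\alpha\le 1+\alpha^2$. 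These cases are exhaustive, establishing (ii).

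With (i) and (ii) in hand, Theorem~\ref{thm:char} delivers the desired stochastic matrix $T_{\alpha,\beta}$; the degenerate case $\alpha=\beta$ is handled trivially by $T_{\alpha,\beta}=I$. I do not expect a genuine obstacle here: all the content sits in the two elementary inequalities above, and the only points requiring care are that the case split on where a column's peak falls among the three chosen rows is exhaustive, and that the per-column normalization (different for the two boundary output columns) really plays no role because the characterizing inequality is positively homogeneous in each column.
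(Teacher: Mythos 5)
Your proposal is correct and follows essentially the same route as the paper: apply Theorem~\ref{thm:char} to $G_{n,\beta}$ and verify the three-consecutive-entry condition by the same case split (geometric progression of ratio $\beta$ or $1/\beta$, reducing to $(\beta-\alpha)(1-\alpha\beta)\ge 0$, versus the peak case $1+\alpha^2\ge 2\alpha\beta$), concluding that $T_{\alpha,\beta}=G_{n,\alpha}^{-1}G_{n,\beta}$ is stochastic. Your explicit checks of $\alpha$-differential privacy, of the homogeneity that disposes of the per-column normalization, and of the degenerate case $\alpha=\beta$ are points the paper leaves implicit, but they are minor and do not change the argument.
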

\begin{proof}
  Theorem~\ref{thm:char} states that $G_{n,\beta}$ can be derived
  from $G_{n,\alpha}$ if and only if for every three consecutive
  entries $x_1,x_2,x_3$ in any column of $G_{n,\beta}$, $ (x_2 -
  \alpha x_1) \geq \alpha(x_3 - \alpha x_2)$. We check this
  condition for each of the three forms that consecutive entries in
  each row of $G_{\beta,n}$ can have:
  \begin{enumerate}
  \item $(\beta^i, \beta^{i+1}, \beta^{i+2})$ : $(1+\alpha^2)\beta^{i+1}
    - \alpha(\beta^i + \beta^{i+2}) = \beta^i (\beta + \alpha^2
    \beta - \alpha - \alpha \beta^2) = \beta^i (\beta - \alpha)
    (1-\alpha \beta) > 0. $
  \item $ (\beta, 1 , \beta) : (1+\alpha^2) 1 -  \alpha(\beta +
    \beta) = 1 + \alpha^2 - 2\alpha\beta > (1-\alpha)^2 > 0. $
  \item $(\beta^{i+2}, \beta^{i+1}, \beta^i) :
    (1+\alpha^2)\beta^{i+1} - \alpha(\beta^i + \beta^{i+2}) =
    \beta^i (\beta + \alpha^2 \beta - \alpha - \alpha \beta^2) =
    \beta^i (\beta - \alpha) (1-\alpha \beta) > 0. $
  \end{enumerate}
  This shows that $T_{\alpha,\beta} = G_{n,\alpha}^{-1} \cdot
  G_{n,\beta}$ is a stochastic matrix.
\end{proof}


\begin{algorithm}{ht}
  \dontprintsemicolon
  \KwIn{True Query Result $r$. $k$ privacy levels given by
    parameters $\alpha_1 < \alpha_2 < \ldots < \alpha_k$.}
  \KwOut{Query Results $r_1, r_2, \ldots, r_k$ to be released.}
  
  Define $T_1 = G_{\alpha_1,n}$.\; 
  \For{ $1\leq i \leq k$}
  { 
    Compute post-processing matrix $T_{i+1}$ such that
    $G_{\alpha_{i+1},n} = G_{\alpha_i,n} \cdot T_{i+1}$. 
  }\;

  By Lemma~\ref{lem:addpriv}, each $T_i$ is a stochastic
  matrix. Hence, we can think of $T_i$ as a mechanism -- Given any
  input $k$ we sample from the probability distribution given by
  the $k^{th}$ row of $T_i$ which we represent by $T_i(k)$ \;

  Let $r_0 = r$.\; 
  \For{ $1\leq i \leq k$}
  { $r_i = T_i(r_{i-1})$ is obtained by treating $r_{i-1}$ 
    as the true query output and applying mechanism $T_i$ to it.\;
  }
   
  Release the query results $r_1, r_2, \ldots, r_k$ to consumers at
  privacy levels $\alpha_1,\ldots,\alpha_k$.\;

  \caption{Releasing Query Result to Consumers at Multiple Levels
    of Trust.}
\label{alg:multilevel}
\end{algorithm}

The release mechanism is given in Algorithm~\ref{alg:multilevel}.
We conclude the section by proving that
Algorithm~\ref{alg:multilevel} is collusion-resistant.

\begin{lemma}
  \label{lem:multilevel}
  Any subset $C = \{c_1 < \cdots < c_t\} \subseteq \{1,\ldots,k\}$
  of colluding information consumers who have access to query
  results $R(C) = \{r_{c_1}\ldots r_{c_k}\}$, released at privacy
  levels $\alpha_{c_1}, \ldots, \alpha_{c_k}$, respectively, can
  only reconstruct as much information about the database $d$ by
  combining their results as $c_1$ can working alone.
\end{lemma}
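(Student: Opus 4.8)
The plan is to show that the joint distribution of the query results $(r_{c_1}, \ldots, r_{c_t})$ available to the colluding set $C$ is a post-processing of the single result $r_{c_1}$, and that this post-processing does not depend on the true query result $r$. Once that is established, the lemma follows immediately from the data-processing inequality for differential privacy: since $r_{c_1}$ alone is $\alpha_{c_1}$-differentially private (it is produced by the $\alpha_{c_1}$-geometric mechanism, which is $\alpha_{c_1}$-differentially private by the remark after Definition~\ref{def:privacy}), any function --- deterministic or randomized --- of $r_{c_1}$ that uses no further information about $d$ is also $\alpha_{c_1}$-differentially private, and hence leaks no more about $d$ than $r_{c_1}$ does.

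The key structural observation is that Algorithm~\ref{alg:multilevel} generates the results by a \emph{Markov chain}: $r_0 = r$, and $r_i = T_i(r_{i-1})$ where $T_i = T_{\alpha_{i-1},\alpha_i}$ is a stochastic matrix (with $T_1 = G_{n,\alpha_1}$), each applied with fresh independent randomness. Consequently $r \to r_1 \to r_2 \to \cdots \to r_k$ is a Markov chain, so conditioned on $r_{c_1}$, the tuple $(r_{c_1+1}, \ldots, r_{c_t})$ --- and in particular its sub-tuple indexed by $C \setminus \{c_1\}$ --- is obtained by composing the stochastic maps $T_{c_1+1}, T_{c_1+2}, \ldots$ and reading off the appropriate coordinates. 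This composed map is a fixed stochastic matrix that depends only on the privacy parameters $\alpha_{c_1}, \ldots, \alpha_k$, not on $r$ or on $d$; call it $T_C$. Thus the collusion view $R(C)$ is exactly $(r_{c_1}, T_C(r_{c_1}))$, a randomized post-processing of $r_{c_1}$ alone.

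The steps, in order, are: (i) recall from Lemma~\ref{lem:addpriv} and Algorithm~\ref{alg:multilevel} that each $T_i$ is a genuine stochastic matrix and that the $r_i$ are generated by successive independent applications, so the $r_i$ form a Markov chain; (ii) note $r_{c_1}$ is $\alpha_{c_1}$-differentially private, being the output of the $\alpha_{c_1}$-geometric mechanism (composing $T_1, \ldots, T_{c_1}$ yields $G_{n,\alpha_{c_1}}$ by the telescoping in Algorithm~\ref{alg:multilevel}); (iii) by the Markov property, express the joint law of $R(C)$ given the true result as: sample $r_{c_1}$ from the $\alpha_{c_1}$-geometric mechanism, then apply a fixed stochastic post-processing $T_C$ (the appropriate coordinate-projection of the composition $T_{c_1+1}\cdots T_{c_k}$) to obtain the remaining $r_{c_j}$; (iv) invoke closure of $\alpha$-differential privacy under data-independent (possibly randomized) post-processing to conclude that the entire view $R(C)$ is $\alpha_{c_1}$-differentially private, which is exactly the statement that $C$ learns no more than $c_1$ acting alone.

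The main obstacle is conceptual rather than computational: one must be careful that the Markov structure genuinely holds, i.e., that the randomness used in each stage $T_i$ is independent of all previous stages, so that conditioning on $r_{c_1}$ really does screen off the true result $r$ from everything downstream --- this is where the choice to generate the $r_i$ by \emph{successive} (rather than independent) perturbations is essential, and it is the crux of why the naive independent-noise solution fails. There is also the minor bookkeeping point that $C$ may not contain all indices between $c_1$ and $c_k$, so $T_C$ is a projection of a composition onto the coordinates in $C$; but projecting out coordinates is itself a (deterministic) post-processing, so this does not affect the argument.
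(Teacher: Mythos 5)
Your proposal is correct and follows essentially the same route as the paper: both arguments observe that the successive perturbations make every $r_{c_j}$ with $c_j > c_1$ a randomized post-processing of $r_{c_1}$ using coin tosses independent of the database, so the colluders' joint view reveals nothing beyond $r_{c_1}$. Your write-up merely makes the Markov-chain structure and the data-processing step more explicit than the paper's terse version.
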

\begin{proof}
  The matrix $G_{n,\alpha_1}$ and post-processing matrices
  $T_{\alpha_{c_i},\alpha_{c_{i+1}}}$ can be calculated by
  anyone. Hence, given the random coin tosses made by the
  algorithm, Lemma~\ref{lem:addpriv} shows that $r_{c_j}$ can be
  obtained from $r_{c_i}$ for $c_j > c_i$. Given $r_{c_1}$, having
  access to $R(C)$ can at most reveal information about these coin
  tosses that Algorithm~\ref{alg:multilevel} made. Since, these
  coin tosses do not depend on the database, any information about
  the database that is reconstructed from $R(C)$ can also be
  reconstructed by consumer $c_1$ (who has access to result
  $r_{c_1}$) alone.
\end{proof}

\subsection{Universal Utility Maximizing Mechanisms}
\label{sec:umm}

We now prove that if we deploy the geometric mechanism
(Definition~\ref{def:truncgeom}), then the interaction of every
information consumer will yield a mechanism that is optimal for
that consumer. Since, the geometric mechanism is not dependent on
any information consumer's loss-function or side information, it is
simultaneously optimal for all of them.

Our result proves that all optimal mechanisms can be derived from
the geometric mechanism. However, there do exist differentially
private mechanisms (which are not optimal for any information
consumer) that \emph{cannot} be derived from the geometric
mechanism. We give an example of such a mechanism in
Appendix~\ref{sec:nonmonotone}.

The first part of the proof shows that every two adjacent rows of
every optimal mechanism must satisfy certain condition; if it does
not, we can perturb the mechanism in a way to yield a
differentially private mechanism with strictly better utility. The
second part of the proof leverages this lemma and the
characterization from Theorem~\ref{thm:char} to complete the proof
of Theorem~\ref{thm:umm}.
\begin{lemma}
  \label{lem:perturb}
  For every monotone loss-function $L(l,S) = \max_{i\in S} \sum_{r
    \in N} l(i,r) \cdot x_{i,r}$, there exists an optimal mechanism
  $x$ such that for every two adjacent rows $i,i+1$ of this
  mechanism, there exist column indices $c_1$ and $c_2$ such that:
  \begin{enumerate}
  \item $\forall j \in 1...c_1$ : $\alpha x_{i,j} = x_{i+1,j}$
  \item $\forall j \in c_2...n$ : $x_{i,j} = \alpha x_{i+1,j}$
  \item Either $c_2=c_1+1$ or $c_2=c_1+2$. 
  \end{enumerate}
  
 \end{lemma}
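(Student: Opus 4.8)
\emph{Approach.} The feasible region of the optimization problem of Section~\ref{sec:opt} is a nonempty compact polytope (the geometric mechanism satisfies all its constraints) and the objective $L$ is continuous, so optimal mechanisms exist; let $L^{*}$ be the optimal value. The lemma only asks that \emph{some} optimal mechanism have the stated ``staircase'' shape on every pair of adjacent rows, so I am free to pick a convenient optimal mechanism and then modify it. The plan is an exchange/perturbation argument driven by a monovariant.

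For adjacent rows $(i,i+1)$ of a mechanism $x$, call a column $j$ \emph{left-tight} if $x_{i+1,j}=\alpha x_{i,j}$, \emph{right-tight} if $x_{i,j}=\alpha x_{i+1,j}$, and \emph{slack} otherwise; equivalently the ratio $\rho^i_j:=x_{i+1,j}/x_{i,j}$ equals $\alpha$, equals $1/\alpha$, or lies strictly between (a column that is $0$ in both rows is simultaneously left- and right-tight, so it never causes a problem). Conditions (1)--(3) say exactly that for every $i$ the tightness word of $(i,i+1)$, read left to right, has the form $(\text{left-tight})^{*}\,(\text{slack})^{\le 1}\,(\text{right-tight})^{*}$. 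I would start from the optimal mechanism $x$ minimizing the potential
\[
  \Phi(x)=\sum_{i=0}^{n-1}\Bigl(\#\{\text{slack columns of }(i,i+1)\}+\#\{(j,j'):j<j',\ \rho^i_j>\rho^i_{j'}\}\Bigr),
\]
and show that $\Phi(x)=0$, which is precisely the claimed shape. If $\Phi(x)>0$, some adjacent pair $(i,i+1)$ has either two slack columns or an ``inversion'' $\rho^i_j>\rho^i_{j'}$ with $j<j'$. In either case I would perturb $x$ by transporting an infinitesimal mass between two witnessing columns $j,j'$ inside row $i$ and inside row $i+1$ --- in the \emph{same} direction in both rows when $j,j'$ lie on the same side of the pair, in \emph{opposite} directions when they straddle it --- so as to drive the offending ratios toward $\{\alpha,1/\alpha\}$ and into sorted order, and then scale the perturbation up until some differential-privacy or nonnegativity constraint becomes tight. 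A small perturbation of rows $i,i+1$ changes the objective only through the expected losses of those two rows, and monotonicity of $l$ in $|i-r|$ is exactly what lets me choose the transport so that neither loss increases (moving row $i$'s mass toward the side of the index axis on which $i$ sits, and row $(i+1)$'s mass toward the side on which $i+1$ sits, cannot raise the corresponding expected loss); hence the perturbed mechanism stays feasible with objective $\le L^{*}$, i.e.\ optimal, while $\Phi$ has strictly decreased --- contradicting the choice of $x$.

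\emph{Main obstacle.} Two points require care. First, global feasibility: altering rows $i$ and $i+1$ can violate the privacy constraints those rows share with rows $i-1$ and $i+2$, so the mass transport must be implemented so as to disturb only ratios that have room to move --- most cleanly by realizing it as a uniform rescaling of one column across \emph{all} rows, compensated by a per-row rescaling of a second column chosen to have slack in the required direction (the first column's within-column ratios are then left exactly unchanged, and only the second column's ratios move, by $O(\varepsilon)$) --- and one must check that such a second column can always be found and that no new defect is created in neighboring pairs. Second, the sign bookkeeping for the loss: one must verify, defect type by defect type and according to where $i$ and $i+1$ fall relative to the witnessing columns, that a feasible perturbation direction exists along which neither relevant expected loss increases and $\Phi$ strictly decreases; in particular the ``right-tight before left-tight'' configuration and the ``two slack columns on the same side of the pair'' configuration need the monotonicity of $l$ and the arbitrariness of $S$ handled explicitly. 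This finite case analysis is the technical heart; the remaining steps (compactness, the monovariant, pushing to a constraint boundary) are routine.
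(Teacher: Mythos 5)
Your high-level strategy (select a distinguished optimal mechanism, then derive a contradiction from a feasibility- and optimality-preserving perturbation using monotonicity of $l$) matches the paper's, but two of its load-bearing steps do not go through as described. First, the target of your monovariant is wrong: you claim the $\Phi$-minimizing optimal mechanism has $\Phi(x)=0$ and that this ``is precisely the claimed shape.'' It is not --- the lemma explicitly permits one slack column per adjacent pair (the case $c_2=c_1+2$), whereas $\Phi=0$ forbids all slack columns. Worse, $\Phi=0$ is generically unattainable: if every column of the pair $(i,i+1)$ were left- or right-tight, the two row-sum constraints would force the mass of row $i$ on the left-tight columns to equal exactly $\frac{1}{1+\alpha}$, a knife-edge condition that optimal mechanisms need not satisfy (the mechanism in Table~\ref{tab:optmech}(a) has a genuinely slack column in its first pair of rows). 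Your own defect description (``two slack columns or an inversion'') is the correct negation of the shape, so the potential should count only slack columns in excess of one per pair; as written, the induction aims at an impossible target and would stall on a mechanism that already satisfies the lemma.

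The deeper gap is the perturbation itself. You correctly flag that modifying rows $i,i+1$ endangers the constraints they share with rows $i-1$ and $i+2$, but your repair --- uniformly rescale one column across all rows and compensate per-row in a second column --- does not work: the compensation moves the second column's within-column ratios by $\Theta(\varepsilon)$ in \emph{every} adjacent pair, so feasibility fails wherever that column is already tight in the wrong direction, and no column need be universally slack. The missing idea is the paper's: from the failed shape extract columns $j<k$ with $\alpha x_{i,j} < x_{i+1,j}$ and $\alpha x_{i+1,k} < x_{i,k}$, and perturb an entire \emph{prefix} of rows $\{0,\dots,i\}$ (or suffix, according to whether $i \le (j+k)/2$) by the same proportional transfer $y_{i',j}= x_{i',j}+\delta x_{i',k}$, $y_{i',k}=(1-\delta)x_{i',k}$. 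Every privacy constraint inside the modified block is then a fixed conic combination of constraints already satisfied by $x$, the other block is untouched, and only the single boundary $(i,i+1)$ needs checking, where the two strict inequalities leave room for small $\delta>0$; the prefix/suffix choice simultaneously makes the transfer loss-non-increasing for all modified rows. Finally, the paper closes with a \emph{continuous} lexicographic tie-breaker $L'(x)=\sum_{i,r}x_{i,r}\lvert i-r\rvert$ that strictly decreases, so it never has to argue that no new slack columns or inversions appear in other row pairs --- whereas your combinatorial $\Phi$ obliges you to prove both that $\Phi$ strictly drops at whatever constraint becomes tight first (which may lie in a neighboring pair or in the compensating column, resolving nothing) and that no new defects are created elsewhere; neither is established in the sketch.
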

\begin{proof}
  We define the function $L' : M \to \real$ given by $L'(x) =
  \sum_{i\in N} \sum_{r\in N} x_{i,r} \cdot |i-r|$. Consider the
  total order $\succ$ on $\real^2$ given by $(a,b) \succ (c,d)
  \iff\{ (a > c) \text{ or } (a = c \text{ and } b > d) \}$. Let
  $x$ be an optimal mechanism for the loss-function $(L,L')$
  according to the order defined above. The idea here is that there
  are multiple mechanisms that optimize $L$ and using $L'$ we
  isolate the ones with the property that we want. We prove by
  contradiction that $x$ satisfies the constraints given above.
  
  Assume otherwise. Then there exist rows $i,i+1$ and columns
  $j,k$; $k>j$ such that $\alpha x_{i,j} < x_{i+1,j}$ and $\alpha
  x_{i+1,k} < x_{i,k}$. We shall construct a differentially private
  mechanism $y$ for which $(L_y,L'_y)$ is strictly smaller than
  $(L_x,L'_x)$ which is a contradiction since we assumed that $x$
  minimized $(L,L')$.
  
  We divide the proof into two cases : $i \leq (j + k)/2$ and $i >
  (j + k)/2$. Consider the case $i \leq (j + k)/2$ first. For
  $i'\in \{1\ldots i\}$ set $y_{i',j} \leftarrow x_{i',j} + \delta
  x_{i',k}$ and $y_{i',k} \leftarrow (1-\delta)x_{i',k}$. For all
  other values set $y_{l,m} = x_{l,m}$. We first show that $y$ is a
  differentially private mechanism. Let the set of changed elements
  $C =\{ y_{l,m} : m \in \{j,k\} \text{ and } l \leq i\}$. The set
  of unchanged elements $U$ is all the remaining $y_{l,m}$. All
  privacy constraints involving elements only from $U$ are
  satisfied since they were satisfied in $M$. The privacy
  constraints involving only elements in $C$ continue to hold since
  they are the same linear combinations of corresponding elements
  from $M$. We only need to check that the privacy constraints are
  satisfied when one element is from $C$ and another from $U$. But
  this only happens for $y_{i,j}, y_{i+1,j}$ and $y_{i,k},
  y_{i+1,k}$. By assumption, $\alpha x_{i,j} < x_{i+1,j}$ and
  $\alpha x_{i+1,k} < x_{i,k}$. We can choose a small enough
  $\delta$ such that $\alpha y_{i,j} =\alpha (x_{i,j} + \delta
  x_{i,k}) < x_{i+1,j} = y_{i+1,j}$ and $ \alpha y_{i,k} = \alpha
  (1-\delta)x_{i,k} < x_{i+1,k} = y_{i+1,k}$. Also, for $m \in \{j,
  k\}$, $y_{i,m} > x_{i,m} > \alpha x_{i+1,m} = y_{i+1,m}$ . This
  proves that $y$ satisfies differential privacy. 

  Now, we shall prove that $y$ has strictly smaller loss than
  $x$. For any row $r \in \{1,\ldots,i\}$, the change in loss due
  to row $r$ is
  \begin{align*}
    &\sum_{i\in N} l(r,i) x_{r,i} - \sum_{i\in N} l(r,i) y_{r,i} \\
    &= ( l(r,j) x_{r,j} + l(r,k) x_{r,k}) - ( l(r,j) y_{r,j} + l(r,k)
    y_{r,k} ) \\
    &= ( l(r,j) x_{r,j} + l(r,k) x_{r,k}) \\
    &\qquad - ( l(r,j) (x_{r,j} +
    \delta x_{r,k}) + l(r,k) (x_{r,k} -\delta x_{r,k} ) \\
    &= \delta x_{r,k} ( l(r,k) - l(r,j)) \\
    &\geq 0 \quad \text{since
      $l(i,j)$ is monotonic in $|i-j|$}.
  \end{align*}
  The total loss $L = \max_{r\in S} \sum_{i\in N}l(i,r)\cdot
  x_{i,r}$ and since so $L_x \geq L_y$. Also $\sum_i \sum_r x_{i,r}
  \cdot |i-r| > \sum_i \sum_r y_{i,r} \cdot |i-r|$. This means that
  $(L_x,L'_x) \succ (L_y ,L'_y)$. But $x$ was an optimal mechanism
  with respect to $\succ$. This gives us a contradiction.
  
  The proof for the case $i > (j + k)/2$ is similar. For $i \geq i'$
  set $y_{i',k} \leftarrow x_{i',k} + \delta x_{i',j}$ and $y_{i',j}
  \leftarrow (1-\delta)x_{i',j}$. The same arguments as above now hold
  for this definition of $y$ as well.
\end{proof}

We are now ready to prove Theorem~\ref{thm:umm}. We state it again
for convenience.

\ummthm*

\begin{proof} 
  Algorithm~\ref{alg:multilevel} is used to deploy geometric
  mechanism at different levels of privacy. Lemma~\ref{lem:addpriv}
  shows that it is always possible to deploy geometric mechanism
  this way. This proves that the deployed mechanisms are
  differentially private. Lemma~\ref{lem:multilevel} proves that
  the release is $\alpha_{c_{i'}}$-differentially private even for
  any set $C$ of colluding consumers, where $i' = \min\{ j : j \in
  C\}$. This completes the proof of part 1.

  To prove part 2, we concentrate on a single trust level with privacy
  parameter $\alpha$. We prove the result by contradiction. Assume
  there is an information consumer $c$ with loss-function $l$ and
  side-information $S$, whose interaction with $G_{n,\alpha}$ does not
  optimize its loss. Let $M$ be an optimal differentially private
  mechanism for $c$ that satisfies Lemma~\ref{lem:perturb}. Since, $c$
  cannot optimize its loss by interacting with $G_{n,\alpha}$, $M$
  cannot be derived from the geometric mechanism. We prove that this
  implies that $M$ is infeasible which is a contradiction.

  We know from Theorem~\ref{thm:char} that
  there exists a column $j$ of $M$ and rows $i,i+1,i+2$, such
  that the three entries $x_{i,j},x_{i+1,j},x_{i+2,j}$ satisfy 

  \begin{equation} \label{eq:contra} (1+\alpha^2)x_{i+1,j} -
    \alpha(x_{i,j} + x_{i+2,j}) < 0.
  \end{equation}

  Recall the pattern of every pair of adjacent rows of $M$ from
  Lemma~\ref{lem:perturb}. Let $k$ be the unique column that
  satisfies $\alpha x_{i,k} < x_{i+1,k}$ and $\alpha x_{i+1,k} <
  x_{i,k}$, or if there is no such column, let it be the last
  column such that $\alpha x_{i,j} = x_{i+1,j}$. Let $a=\sum_{l <k}
  x_{i,l}$, $b = x_{i,k}$, $b' = x_{i+1,k}$, $b''= x_{i+2,k}$ and
  $c =\sum_{l >k} x_{i,l}$. Rewrite Equation~\eqref{eq:contra} to
  get: $ 0 \leq x_{i+1,j} - \alpha x_{i+2,j} < \alpha ( x_{i,j} -
  \alpha x_{i+1,j}) \implies x_{i,j} > \alpha x_{i+1,j} $ Thus, by
  Lemma~\ref{lem:perturb}, $k \geq j$. We now claim that:

  \begin{equation} 
    \label{eq:1} 
    (1+\alpha^2)b' - \alpha(b+b'') < 0
  \end{equation}

  This is true from Equation~\eqref{eq:contra} if $k=j$. Otherwise
  rewrite Equation~\eqref{eq:contra} to get $0 \leq x_{i+1,j} -
  \alpha x_{i,j} < \alpha ( x_{i+2,j} - \alpha x_{i+1,j}) \implies
  x_{i+2,j} > \alpha x_{i+1,j}$. Thus, by Lemma~\ref{lem:perturb},
  it must be that $\alpha \cdot b''=b$, Further, by privacy $b \geq
  \alpha b'$ and so, $b > \alpha^2 b'$.  This proves the claim.

  Because $M$ is a generalized stochastic matrix, $\sum_{l} x_{i,l}
  = \sum_{l} x_{i+1,l}=1$. Thus, $a+b+c=1$ and $\alpha\cdot
  a+b'+c/\alpha=1$.  Using these equations, we have:
  \begin{equation} 
    \label{eq:2}
    a=\frac{1-b-\alpha+b'\alpha}{1-\alpha^2} \quad \mbox{and} \quad
    c=\frac{\alpha-\alpha^2+ b\alpha^2-b'\alpha}{1-\alpha^2}
  \end{equation}
  We now prove that $M$ is not feasible.
  \begin{align*}
    &\sum_{l} x_{i+2,l} \geq \alpha^2 \cdot a + b'' + c/\alpha^2 \\
    &= \frac{\alpha^3 - b \alpha^3 - \alpha^4 + b'\alpha^4 +
      b''\alpha-b''\alpha^3 + 1- \alpha -b' + b\alpha}{\alpha
      (1-\alpha^2)}\\ 
    &= \frac{1-\alpha+\alpha^2}{\alpha} + \frac{(b+b'')\alpha - b'
      (1+ \alpha^2)}{\alpha} \\
    &> 1
  \end{align*}

  The first step is from Equation~\eqref{eq:contra} and
  Lemma~\ref{lem:perturb}, the second is by Equation~\eqref{eq:2},
  the third is by rearranging and the fourth holds because the
  first summand is always at least $1$ and the second is strictly
  positive by Equation~\eqref{eq:1}.
\end{proof}

\section{Conclusion}

We give a minimax model of utility for information consumers that
is prescribed by decision theory. We show that for any particular
count query, the geometric mechanism is simultaneously optimal for
all consumers, assuming that consumers interact rationally with the
output of the mechanism.  This is particularly useful in publishing
aggregate statistics, like the number of flu infections in a given
region, to a wide unknown audience, say on the Internet.

An open question is to investigate whether similar guarantees are
possible for multiple queries and other types of queries.

\bibliographystyle{plain}
\bibliography{priv}

\appendix

\section{There always exists an Optimal Mechanism that is
  Oblivious}
\label{sec:nonoblivious}

In Section~\ref{sec:model} we restricted attention to oblivious
mechanisms. While natural mechanisms (such as the Laplace mechanism
from~\cite{BigBang}) are usually oblivious, we now justify this
restriction from first principles. Specifically, we show that for
every information consumer with a loss-function over databases and
side information over query results, there exists a oblivious
loss-function and side-information, such that the optimal utility
with the oblivious loss-function is no more than the optimal
utility with the non-oblivious loss-function.

Consider a non-oblivious mechanism $x$. For the minimax information
consumer with loss-function $l$ over databases and side information
$S \subseteq \{0,1,\ldots,n\}$, the utility of this mechanism is
given by
\begin{equation} 
  \label{eq:obj1}
  \max_{d\in S \subseteq D^n} \sum_{r\in N} x_{d,r}  \cdot l(f(d),r)
\end{equation}

The following lemma proves that obliviousness is without loss of
generality i.e. there always exists an oblivious mechanism whose
loss is lower than or equal to the loss of the best non-oblivious
mechanism.

\begin{lemma} 
  Fix a database size $n \geq 1$ and privacy level $\alpha$. For
  every minimax information consumer with loss-function $l$ and
  side information $S \subseteq \{0,1,\ldots,n\}$, there is an
  $\alpha$-differentially private mechanism that minimizes the
  objective function~\eqref{eq:obj1} and is also oblivious.
\end{lemma}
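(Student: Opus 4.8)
The plan is to prove that the minimum value of the objective~\eqref{eq:obj1} over all $\alpha$-differentially private mechanisms is already achieved by an oblivious one. One direction is immediate: any oblivious mechanism $\{x'_{i,r}\}_{i,r\in N}$ is itself a mechanism over databases, namely $x_{d,r}:=x'_{f(d),r}$; neighboring databases have counts differing by at most one, so the oblivious privacy inequalities of Definition~\ref{def:privacy} imply the general differential-privacy constraints, and the value of~\eqref{eq:obj1} at this $x$ equals the minimax loss $\max_{i\in S}\sum_{r\in N}l(i,r)\,x'_{i,r}$ of the oblivious mechanism. Hence the infimum of~\eqref{eq:obj1} over oblivious mechanisms is at least its infimum over all mechanisms.

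For the reverse inequality I would take an \emph{arbitrary} $\alpha$-differentially private (possibly non-oblivious) mechanism $x$ and build from it an oblivious mechanism $x'$ with no larger loss. The construction rests on a canonical ``monotone'' chain of databases. If every element of $D$ has the same value under the predicate $p$ then $f$ is constant and the claim is trivial, so fix $t,f\in D$ with $p(t)=\text{True}$ and $p(f)=\text{False}$, and let $d_j\in D^n$ be the database whose first $j$ rows equal $t$ and whose remaining $n-j$ rows equal $f$. Then $f(d_j)=j$, and $d_j$ differs from $d_{j+1}$ in exactly one row, so $d_j$ and $d_{j+1}$ are neighbors for every $j\in\{0,\ldots,n-1\}$. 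Define $x'_{j,r}:=x_{d_j,r}$ for all $j,r\in N$.

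Three checks finish this half. First, each row $\{x'_{j,r}\}_{r\in N}=\{x_{d_j,r}\}_{r\in N}$ is a probability distribution because $x$ is a mechanism, so $x'$ is a legitimate oblivious mechanism. Second, $x'$ satisfies Definition~\ref{def:privacy}: since $d_j$ and $d_{j+1}$ are neighbors, the privacy constraints on $x$ give $\alpha x_{d_j,r}\le x_{d_{j+1},r}\le\tfrac1\alpha x_{d_j,r}$, i.e.\ $\alpha x'_{j,r}\le x'_{j+1,r}\le\tfrac1\alpha x'_{j,r}$. Third, for every $j\in S$ the database $d_j$ has count $f(d_j)=j\in S$, so it is one of the databases over which the maximum in~\eqref{eq:obj1} is taken; therefore $\sum_{r\in N}l(j,r)\,x'_{j,r}=\sum_{r\in N}l(f(d_j),r)\,x_{d_j,r}$ is at most the value of~\eqref{eq:obj1} at $x$, and taking the maximum over $j\in S$ shows that the loss of $x'$ is at most the value of~\eqref{eq:obj1} at $x$. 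Consequently the infimum of~\eqref{eq:obj1} over all mechanisms is at least its infimum over oblivious mechanisms.

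Putting the two inequalities together, the two infima coincide; and the oblivious one is the infimum of the continuous, piecewise-linear function $\max_{i\in S}\sum_{r\in N}l(i,r)\,x'_{i,r}$ over the nonempty compact set of $(n+1)\times(n+1)$ row-stochastic matrices obeying the finitely many inequalities of Definition~\ref{def:privacy}, hence it is attained by some oblivious $x^{*}$. This $x^{*}$ then minimizes~\eqref{eq:obj1} over all $\alpha$-differentially private mechanisms, proving the lemma. I do not expect a genuine difficulty here: the one point that needs a little care is exhibiting the neighbor chain $d_0,\ldots,d_n$ that realizes every count value exactly once (together with the degenerate case in which $D$ uses only one predicate value); the remaining steps are routine manipulations of the privacy inequalities and of the maximum defining~\eqref{eq:obj1}.
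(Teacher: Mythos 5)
Your proof is correct, but it reaches the conclusion by a genuinely different route than the paper. The paper symmetrizes: it partitions $D^n$ into classes $E(d)$ of databases with equal count, defines $x'_{E(d),r}$ as the \emph{average} of $x_{d',r}$ over $d'\in E(d)$, and then must work to verify privacy of the averaged mechanism via a double-counting argument (every database in $E(d_1)$ has the same number of neighbors in $E(d_2)$ and vice versa, so the pairwise privacy inequalities can be summed with equal multiplicities); the loss bound follows since an average is at most a maximum. You instead \emph{restrict}: you exhibit a monotone chain $d_0,\ldots,d_n$ of pairwise-neighboring databases realizing each count exactly once and set $x'_{j,r}:=x_{d_j,r}$, so that privacy of $x'$ is an immediate transcription of the privacy constraints along the chain and the loss bound is immediate because each $d_j$ with $j\in S$ already appears in the max defining~\eqref{eq:obj1}. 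Your version is more elementary (no counting argument) and is also more careful about two points the paper glosses over --- the easy converse direction that an oblivious mechanism induces a general one of equal loss, and the attainment of the minimum via compactness of the feasible polytope --- while the paper's averaging has the mild advantages of being canonical (no choice of representatives $t,f\in D$ or of chain) and of showing the stronger fact that symmetrizing \emph{any} optimal mechanism preserves optimality. You correctly flag and dispose of the degenerate case where the predicate is constant on $D$ (where no chain exists); the only cosmetic blemish is the clash of the symbol $f$ for both the count function and your chosen False-valued row.
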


\begin{proof}
  We shall now construct a differentially privacy
  mechanism $x'$ that is oblivious and whose loss is not greater
  than the loss of $x$. This will prove our assertion.
 
  We construct a partition $E$ of all the databases, according to
  the query output. All databases that have the same query output
  belong to the same subset of the partition. For a database $d$,
  let $E(d) = \{d' : q(d) = q(d')\}$. For $r\in N$ and $d\in D^n$,
  define $x'_{E(d),r} = \avg_{d' \in E(d)} x_{d',r}$. It is clear
  that $x'$ is an oblivious mechanism.

  First we show that $x'$ is $\alpha$-differentially private. Fix
  two databases $d_1, d_2 \in D^n$ such that $d_1$ and $d_2$ differ
  in exactly one row; We need to show that $\alpha x'_{d_1r} \leq
  x'_{d_2r}$. Assume $f(d_1) \ne f(d_2)$, otherwise the proof is
  trivial.

  For any database of $E(d_1)$, we can generate all its neighbors
  (databases that differ in exactly one row) in $E(d_2)$ by
  enumerating all the ways in which we can change the query result
  by exactly $1$. For instance when $f(d_1) = f(d_2) +1$, pick one
  of the $n-f(d_1)$ rows that satisfy the predicate in $d_1$ and
  change its value to one of those that violates the
  predicate. This process is identical for all databases of
  $E(d_1)$, and so for all $d \in E(d_1)$, the number of neighbors
  of $d$ that belong to the set $E(d_2)$ is the same (does not vary
  with $d$). Similarly, for all $d \in E(d_2)$, the number of
  neighbors of $d$ that belong to the set $E(d_1)$ is the same.

  Consider the following set of inequalities that hold because $x$
  is $\alpha$-differentially private: $d \in E(d_1)$, $d' \in
  E(d_2)$, where $d_1$ and $d_2$ are neighbors, $ \alpha x_{dr}
  \leq x_{d'r}$. By the argument in the above paragraph, all the
  databases in $E(d_1)$ appears equally frequently in the
  left-hand-side of the above inequality and all the databases in
  $E(d_2)$ equally frequently in the right-hand-side. Summing the
  inequalities and recalling the definition of $x'$ completes the
  proof of privacy.

  Now we show that $x'$ does not incur more loss than $x$. The loss
  for $x'$ is given by $\max_{d\in S \subseteq D^n} \sum_{r\in N}
  x'_{E(d),r} \cdot l(f(d),r)$. Suppose the worst loss for $x'$
  occurs for the partition $E(d_1)$.
  \begin{align*}
    L(x') &= \sum_{r \in N} x'_{E(d_1,r)} \cdot  l(f(d_1),r)\\
    &= \sum_{r \in N} \avg_{d \in E(d_1)}(x_{d,r}) \cdot l(f(d),r) \\
    &\leq \max_{d \in E(d_1)} \sum_{r\in N} x_{d,r} \cdot l(f(d),r) \\
    &\leq \max_{d \in S \subseteq D^n} \sum_{r\in N} x_{d,r} \cdot l(f(d),r) = L(x).
  \end{align*}
  This completes the proof.

\end{proof}






\section{A Differentially Private Mechanism that is not derivable
  from the Geometric Mechanism}
\label{sec:nonmonotone}

Consider the mechanism $M$ given by the following matrix. $M(i,j)$
gives the probability of returning $j$ when the true query result
is $i$. We can verify that $M$ is $\frac{1}{2}$-differentially private. 
\[ M = \begin{bmatrix}
        1/9   &     2/9    &    4/9  &      2/9 \\
        2/9   &     1/9    &    2/9  &      4/9 \\
        4/9   &     2/9    &    1/9  &      2/9 \\
      13/18   &     1/9    &   1/18  &      1/9 \\
\end{bmatrix}
\]
We claim that $M$ cannot be derived from the geometric
mechanism. We can explicitly calculate $G^{-1}_{3,\frac{1}{2}}\cdot
M$ to see that $M$ is not derivable from the geometric. Instead we
shall use the characterization from Theorem~\ref{thm:char}. If we
look at elements $M(0,1), M(1,1), M(2,1)$ , then
$(1+\alpha^2)M(1,1) - \alpha(M(0,1) + M(2,1)) = 1.25\times
\frac{1}{9} - \frac{1}{2} \times (\frac{2}{9} + \frac{2}{9}) =
\frac{- 0.75}{9}$. This proves that $M$ cannot be derived from
$G_{3,\frac{1}{2}}$.

\end{document}